\newcommand{\beq}{\begin{equation}}
\newcommand{\enq}{\end{equation}}
\newcommand{\ben}{\begin{eqnarray}}
\newcommand{\enn}{\end{eqnarray}}
\newcommand{\bei}{\begin{itemize}}
\newcommand{\eni}{\end{itemize}}
\newcommand{\bm}[1]{\mbox{\boldmath{$#1$}}}
\newtheorem{theorem}{Theorem}
\newtheorem{proposition}{Proposition}
\newtheorem{lemma}{Lemma}
\newtheorem{corollary}{Corollary}
\newtheorem{remark}{Remark}
\newcommand{\figcaption}{\def\@captype{figure}\caption}
\newcommand{\tabcaption}{\def\@captype{table}\caption}
\date{}
\title{
\vspace{-1.0cm}
   \hfill{\em\small{}}\\
   \vspace{0.6cm} \LARGE
\begin{center}
{Achieving Global Optimality for Joint Source and Relay Beamforming Design in Two-Hop Relay Channels}\end{center}
}
\author{Hongying~Tang, Wen~Chen,~\IEEEmembership{Senior Member,~IEEE},\\ Jun Li, ~\IEEEmembership{Member,~IEEE}, Haibin Wan % <-this % stops a space
\thanks{Copyright (c) 2013 IEEE. Personal use of this material is permitted. However, permission to use this material for any other purposes must be obtained from the IEEE by sending a request to pubs-permissions@ieee.org.}
\thanks{Hongying~Tang, Wen~Chen are with the Department of Electronic Engineering,
Shanghai Jiaotong University, Shanghai, China, 200240, (e-mail: \{lojordan, wenchen\}@sjtu.edu.cn). Jun Li is with school of Eletrical and Information Engineering, University
of Sydney, Australia (e-mail: jun.li@sydney.edu.au). Haibin Wan is the  College of Computer Science and Electronic Information, Guangxi University, China (e-mail: hbwan@gxu.edu.cn).}% <-this % stops a space
\thanks{This work is supported by the National 973 Project \#2012CB316106,
by NSF China \#61161130529 and \#61328101, by STCSM Science and
Technology Innovation Program \#13510711200, and by SEU National Key
Lab on Mobile Communications \#2013D11.}
}
\begin{document}

\maketitle
%\bibliography{myreference}
%\bibliographystyle{ieeetr}

\begin{abstract}
This paper deals with joint source and relay beamforming (BF) design for an amplify-and-forward (AF) multi-antenna multirelay network. Considering that the channel state information (CSI) from relays to destination is imperfect, we aim to maximize the worst case received signal-to-noise ratio (SNR). The associated optimization problem is then solved in two steps. In the first step, by fixing the source BF vector, a semi-closed form solution of the relay BF matrices is obtained, up to a power allocation factor. In the second step, the global optimal  source BF vector is obtained based on the  Polyblock outer Approximation (PA) algorithm.  We also propose two low-complexity methods for obtaining the source BF vector,  which are different in their complexities and performances. The optimal joint source-relay BF solution obtained by the proposed algorithms serves as the benchmark for evaluating the existing schemes and the proposed low-complexity methods. Simulation results show that the proposed robust design can significantly reduce the sensitivity of the channel uncertainty to the system performance.

\end{abstract}

\begin{IEEEkeywords}
Amplify-and-forward, multi-antenna multirelay system, global optimal, beamforming
\end{IEEEkeywords}
\IEEEpeerreviewmaketitle

%\newpage

\section{Introduction}\label{sec:1}
Relay communication can extend the coverage of wireless network and improve the spatial diversity of cooperative systems. There are several cooperative schemes being widely used, i.e., the Amplify-and-Forward (AF) scheme, the Decode-and-Forward scheme \cite{liu}, the Filter-and-Forward \cite{FF_Gershman,Twoway_FF_Schober,AF_FF_Schober} scheme etc. Among them, the AF scheme is the most simple scheme and has been efficiently used to exploit the benefit of relaying in the two-hop relay channels \cite{wang, zhang, Power_Jing, BK_grass, AF-BF, PU_Robust_ICC, ganzhengtsp, HS_Worst, ZW-wnt, HS_Worst_MR}, the multiple access relay channels \cite{wan}, and the two-way relay channels \cite{FF_Gershman,Twoway_FF_Schober,Tao_2, Tao_4, Tao, Tao_3, Tao_1, Tao_5}.

Performing transmit beamforming (BF) at source and relay can achieve higher data rate\cite{Tao_2}\cite{Tao_4}. In particular, AF-BF was considered in the following works \cite{Power_Jing, BK_grass, AF-BF, PU_Robust_ICC, ganzhengtsp, HS_Worst, ZW-wnt, HS_Worst_MR, wan, Tao_2, Tao_4, Tao, Tao_3, Tao_1,Tao_5}. By maximizing the received SNR, \cite{Power_Jing} gives the analytical solution of the beamforming design in a single source and multiple single-antenna relay network.  \cite{BK_grass} considers a multi-antenna source and single multi-antenna relay network, and gives closed-form solutions for both the source BF vector and the relay BF matrix. By relaxing the single-antenna source and single relay assumption, \cite{AF-BF} considers the more general case with a multi-antenna source and multiple multi-antenna relay network, and gives the closed-form of the relay BF matrices and a suboptimal solution for the source BF vector.

%Meanwhile, MIMO technique is well verified to provide significant
%improvement in the spectral efficiency and link reliability because of the multiplexing and
%diversity gains~\cite{Laneman,opti}. Combining the relaying and MIMO techniques can make use of both
%advantages to increase the data rate in the cellular edge and extend the network coverage.
Those works are all based on the perfect channel state information (CSI) assumption. However, in a practical system, the perfect CSI is usually hard to obtain, thus reducing the efficiency of beamforming design. Therefore, robust design taking imperfect CSI into account has attracted much attention \cite{PU_Robust_ICC,ganzhengtsp, HS_Worst, ZW-wnt, HS_Worst_MR, Tao_2, Tao_4, Tao, Tao_3, Tao_1,Tao_5}. In \cite{PU_Robust_ICC}\cite{ganzhengtsp}, the authors consider a robust distributed beamforming design in a wireless relay network by minimizing the total relay transmit power and maximizing the received signal to noise ratio (SNR), respectively.  In the very recent work \cite{HS_Worst}, the authors obtain a closed-form solution for a single antenna source-destination pair and a multi-antenna relay network and discover that the robust design has the consistent form as the nonrobust design. For the more general work in \cite{HS_Worst_MR}, where the source and the destination are equipped with multiple antennas, the authors prove that the robust relay optimization leads to a channel-diagonalizing structure and a closed-form solution is proposed. Robust design in a two-way relay system are also studied in \cite{Tao, Tao_3, Tao_1,Tao_5}, on the maximization of SNR criteria, the MMSE criteria and the minimization of transmit power criteria, respectively.

In this paper, we consider the AF-relay networks with one multi-antenna source, multiple multi-antenna relays and a single antenna destination, and address the joint beamforming design of source and relays under imperfect CSI cases. Joint source and relay beamforming design has been fully investigated in the two-way relay model in both perfect and imperfect CSI cases \cite{Tao_3}\cite{Tao_1}. For the two-hop relay networks, however, this problem has not been well solved till now. Even in the perfect CSI case, \cite{AF-BF} only provides a suboptimal solution for the source BF vector. In the robust case,
\cite{ganzhengtsp} \cite{HS_Worst} discuss the situation when the source or the relays are equipped with a single antenna; \cite{ZW-wnt} investigates the robust relay precoders based on the MMSE receiver and the RZF precoding without taking into account the effect of source beamforming vector. Considering the fact that a practical network may involve a multi-antenna source and multiple relays, it is necessary to investigate the joint source and relay beamforming for these general networks.
%Our paper mainly focus on the individual power constraints for the relays and the source, it can be naturally extended to the case with total relay power constraint. Since joint power constraint for the source and relays is more difficult than the two cases, we leave it as a future work.

The main contributions of this paper are as follows:
\begin{enumerate}
\item Considering imperfect CSI of the second hop at relays, for a given source BF vector, we derive a semi-closed form expression of the relay BF matrices, up to a scalar power allocation factor. Next, we obtain the power allocation factor  through iteration between a Dinkelbach-based approach and a second order cone programming (SOCP) problem.
\item To derive the optimal source BF vector, we transform the original problem into a monotonic problem, which allows us to apply the Polyblock outer Approximation (PA) algorithm to solve the problem. This PA-based algorithm mainly serves as a benchmark for the performance evaluation, both in the perfect CSI case and the robust case.
\item To further reduce the computational complexity, two low-complexity methods are proposed,  which are different in their complexities and performances. Simulation results show that the proposed robust design can significantly reduce the sensitivity of the channel uncertainty to the system performance.
\end{enumerate}

This paper is organized as follows: Section~\ref{sec:2} introduces the system model of the multi-antenna multi-relay channel and gives the problem formulation. In section~\ref{sec:opRBF}, we give  the semi-closed form for the relay BF design under a fixed source BF vector, up to a power allocation factor,  and then propose a Dinkelbach-based algorithm for determining the corresponding power allocation factor. In section~\ref{sec:opSBF}, the global optimal and subtoptimal source BF vectors are obtained. Finally section~\ref{sec:simu} provides numerical examples to validate the proposed algorithms.

In this paper, $[\cdot]^*$, $[\cdot]^T$ and $[\cdot]^H$ respectively denote the conjugate, transpose and conjugate transpose of a matrix or a vector. $\mathbb R^N$ and $\mathbb C^N$ respectively denote the $N$ dimensional real field and complex field. $\mathbf e_i$ denotes a zero vector except that the $i$th element is one, $\mathbf 0_N$ and $\mathbf{I}_N$ respectively denote the $N$-dimensional zero vector and the identity matrix.  We will use boldface lowercase letters to denote column vectors and boldface uppercase letters to denote matrices. $||\mathbf x||_2$ and $\|\mathbf x\|_1$  denote the Euclidean norm  and  the absolute sum of vector $\mathbf x$, respectively. Vec$(\mathbf X)$ stacks the columns of matrix $\mathbf X$ into a vector. $|\mathbf x|\triangleq[|x_1|, \cdots, |x_N|]^T$ and $|\mathbf x|^2\triangleq[|x_1|^2, \cdots, |x_N|^2]^T$.  The positive semidefinite matrix $\mathbf X$ is denoted by $\mathbf X\succeq 0$. For $\mathbf x=[x_1,\cdots,x_N]^T, \mathbf y=[y_1,\cdots,y_N]^T\in \mathbb R^N, \mathbf x\geq \mathbf y$ means $x_i\geq y_i$ for $i=1\dots N$. The tr$(\cdot$) is the trace of a matrix. diag$[x_1, \cdots, x_N]$ denotes a diagonal matrix with the diagonal entries $x_1, \cdots, x_N$.  $\mathbf v^\bot$ and $\mathbf v^\|$ respectively denote the unit vectors parallel and perpendicular to $\mathbf v$. ${\bm\upsilon}(\mathbf X)$ denotes the normalized principal eigenvector of $\mathbf X$.

\section{Problem Statement}\label{sec:2}
\subsection{System Model}
Consider  a two-hop AF multi-antenna multirelay network as shown in Fig.~1. The relays process the signals received from the source by using linear operations and forward the processed signals to the destination. We assume that the source and the relay $i$ have $N_T$ and $M_i$ antennas, for $1\leq i\leq R$, respectively, and the destination only has a single antenna. Note that the direct link between the source and destination is not taken into account due to large scale fading. The signal transmission is completed through two hops. In the first hop, the source transmits the $N_T-$dimensional vector
\begin{figure}[htp]
    \centering
    \includegraphics[width=3.5in]{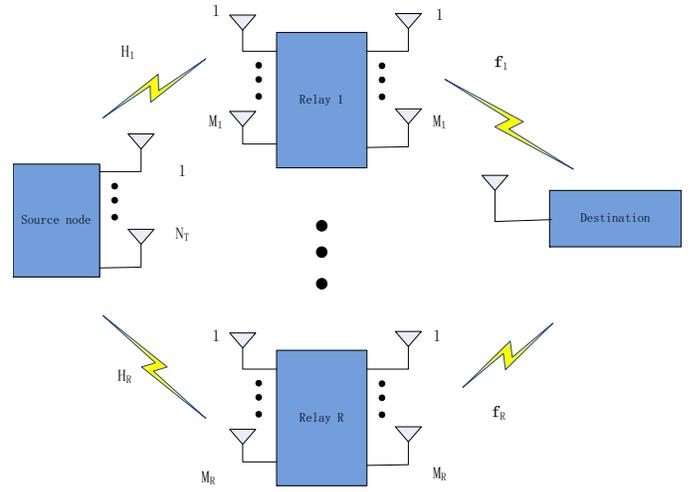}
    \caption{A two-hop multiple antenna multi-relay network.}\label{rank}
\end{figure}
\beq
\mathbf x=\mathbf gd,\nonumber
\enq
where $\mathbf g\in \mathbb C^{N_T}$ denotes the beamforming (BF)  vector at the source, and $d$ is the transmitted symbol with variance $\sigma_d^2=\mathcal E\{|d|^2\}=1$. The signal received by the relay $i$, $1\leq i\leq R$, is given by
\beq
\mathbf q_i=\mathbf H_i\mathbf x+\mathbf n_{i},\nonumber
\enq
where $\mathbf H_i\in \mathbb C^{M_i\times N_T}$ denotes the first hop channel from the source to the $i$th relay, and $\mathbf n_{i}\in \mathbb{C}^{M_i}$ denotes the additive white Gaussian noise (AWGN) vector with the covariance matrix $\sigma^2_R\mathbf{I}_{M_i}$ at relay $i$. By the AF strategy, the signal forwarded by relay $i$ is
\beq
\mathbf s_i=\mathbf B_i\mathbf q_i,\nonumber
\enq
where $\mathbf B_i\in \mathbb C^{M_i \times M_i}$ is the linear precoding matrix of relay $i$.
The received signal at the destination node can thus be expressed as
\ben
r&=&\sum_{i=1}^R\mathbf f_i^T\mathbf s_i+n_D\nonumber\\
&=&\sum_{i=1}^R\mathbf f_i^T\mathbf B_i\mathbf H_i\mathbf gd+\sum_{i=1}^R\mathbf f_i^T\mathbf B_i\mathbf n_{i}+n_D,\nonumber
\enn
where $\mathbf f_i$ denotes the channel from relay $i$ to the destination and $n_D$ is the additive white Gaussian noise (AWGN) observed at the destination with variance $\sigma^2_D$.

\subsection{Channel Uncertainty}\label{subsec:2b}
In a practical wireless communication scenario, perfect CSI is usually difficult to obtain. With only imperfect CSI, the system performance will be deteriorated. This motivates us to investigate the robust design taking the CSI errors into account. As will be verified in the simulations, our proposed robust scheme will significantly reduce the sensitivity of the system to uncertain CSI.

In this paper, we assume that the uncertainty of the first hop channel at the source is negligible and  model the CSI in the second hop at relays to be imperfect, more specifically,
\ben\label{equ:roba1}
\mathbf f_i=\tilde {\mathbf f}_i+\triangle\mathbf f_i,
\enn
where $\tilde {\mathbf  f}_i$ is the available CSI known at the $i$th relay, and $\triangle\mathbf f_i$ is the corresponding CSI error vector.  Under the circumstance when the source (e.g. a base station) and the relays are considered fixed, and the destination is moving (e.g. a mobile terminal), the channel statistics of the two hops are different. The first hop is undergoing a slow fading channel, whereas the second hop channel may be fast fading due to the mobility of the destination. Then the CSI feedback from the destination to the relays are usually outdated, and the channel uncertainty must be considered. \cite{CK_MulPoint} also uses this model for exploiting the situation when the relays are located closer to the source than to the destination, while this assumption is reasonable because of the high signal quality between the source and the relays.

Many existing works~\cite{ganzhengtsp,Tao,nonrob}  assume that the CSI error is bounded in a bundle manner, i.e, $\|\triangle \mathbf f\|_2\leq \varepsilon$ for some small $\varepsilon>0$, where $\triangle \mathbf f\triangleq [\triangle \mathbf f_1^T, \cdots, \triangle \mathbf f_R^T]^T$. However, this model is very conservative, as the channel between each relay node and the destination experiences independent distribution.  In this paper we adopt a more practical model, assuming that the CSI error vectors are estimated independently, i.e., $\|\triangle\mathbf f_i\|_2\leq \varepsilon_i$, for some small $\varepsilon_i>0$. We rewrite it as $\triangle\mathbf f \in \mathcal A$, where
\ben\label{equ:errorf}
\mathcal A\triangleq \{\mathbf a| \mathbf a=[\mathbf a_1^T, \cdots, \mathbf a_R^T]^T, \|\mathbf a_i\|_2\leq\varepsilon_i, \mathbf a_i\in \mathbb C^{M_i}\}\label{ellip}.
\enn
We also assume in this paper that the uncertainty error bound is not too large, i.e.,  $\varepsilon_i\leq\|\tilde {\mathbf f}_i\|_2 $,  which is reasonable since large error bound would lead to the instability of the system and any beamforming design becomes trivial. In this error model, one cannot use the S-lemma to transform the infinitely many constraints of the error vector into a linear matrix inequality (LMI) \cite{ganzhengtsp}\cite{Tao}, as it will degrade into a conservative approach \cite{lectures}.
By contrast, we will use an alternative approach based on the idea of \emph{real-valued implementation} proposed in \cite{LZ_Realvalue}, and prove in section \ref{sec:robb} that only finite realizations of the channel can act as the worst case channel, thus making the optimization problem tractable again.

\subsection{Problem Formulation}
By maximization of the worst case received SNR over the channel uncertainty region under individual power constraints at the relays and the source, the problem of jointly optimizing the source BF and the relay BFs can be mathematically formulated as
\begin{subequations}\label{equ:roba3}
\ben
\max_{\{\mathbf B_i\}_{i=1}^R, \mathbf g} \min_{\triangle \mathbf f\in \mathcal A}&\frac{|\sum_{i=1}^R\mathbf f_i^T\mathbf B_i\mathbf H_i\mathbf g|^2}{\sigma_D^2+\sigma_R^2\sum_{i=1}^R\|\mathbf f_i^T\mathbf B_i\|_2^2},\label{equ2}\\
\text{s.t}.& \|\mathbf B_i\mathbf H_i\mathbf g\|_2^2+\sigma_R^2\text{tr}(\mathbf{B}_i^H\mathbf{B}_i)\leq P_i,\forall i,\\
& \|\mathbf g\|_2^2\leq P_s.
\enn
\end{subequations}
where $P_s$ is the maximum power at the source and $P_i$ is the maximum power at relay $i$.
In  section \ref{sec:opRBF}, we first fix the source BF vector $\mathbf g$, and derive a semi-closed form of the optimal relay BF matrices up to a real-valued power allocation factor, which can be determined by an SOCP problem. Then in section \ref{sec:opSBF}, we propose a global optimal as well as two suboptimal algorithms to determine $\mathbf g$.

\section{Optimal BF matrices at relays}\label{sec:opRBF}
By fixing the source BF vector $\mathbf g$ and  taking into account the CSI error model \eqref{equ:roba1} and \eqref{equ:errorf}, problem \eqref{equ:roba3} becomes
\begin{subequations}\label{equ:robb1}
\ben
\max_{\{\mathbf B_i\}_{i=1}^R} \min_{\triangle \mathbf f\in \mathcal A} &\frac{|\sum_{i=1}^R (\tilde {\mathbf  f}_i+\triangle\mathbf f_i)^T\mathbf B_i\mathbf u_i|^2}{\sigma_D^2+\sigma_R^2\sum_{i=1}^R\| (\tilde {\mathbf f}_i+\triangle\mathbf f_i)^T\mathbf B_i\|_2^2},\label{equ2}\\
\text{s.t}.& \|\mathbf B_i\mathbf u_i\|_2^2+\sigma_R^2\text{tr}(\mathbf B_i^H\mathbf{B}_i)\leq P_i,\forall i,
\enn
\end{subequations}
where we defined $\mathbf u_i\triangleq \mathbf H_i\mathbf g$ for convenience. In section \ref{sec:opRBFa}, we will first introduce the related work of problem \eqref{equ:robb1}.  By fixing the source BF vector $\mathbf g$, a semi-closed form of $\mathbf B_i$ is given in section \ref{sec:robb}, up to a power allocation factor. Then in section \ref{subsec:opRBFb}, the optimal power allocation factor is determined via a Dinkelbach-based algorithm.
\subsection{Related Work}\label{sec:opRBFa}
Problem \eqref{equ:robb1} has been discussed in~\cite{nonrob}, where the authors consider the problem in the multipoint-to-multipoint setting. By vectorizing all $\mathbf B_i$ and stacking them to form a column vector as
\ben
\mathbf b_L\triangleq [\text{vec}(\mathbf B_1)^T, \cdots, \text{vec}(\mathbf B_R)^T]^T\in \mathbb C^{\sum_{i=1}^RM_i^2},\nonumber
\enn
then after some tedious manipulations, \eqref{equ:robb1} can be transformed into a semidefinite programming (SDP) problem with variable $\mathbf B\triangleq\mathbf b_L\mathbf b_L^H\in \mathbb C^{({\sum_{i=1}^RM_i^2})\times ({\sum_{i=1}^RM_i^2})}$. Obviously, this leads to prohibitively computational complexity. In addition, in some cases, the optimal $\mathbf B$ obtained by the SDP solver may not be of rank one, thus leading to suboptimal $\mathbf b_L$. Furthermore, the result in \cite{nonrob} is numerical and cannot provide any insight to the structure of the optimal relay BF matrices. Therefore it is necessary to re-investigate problem \eqref{equ:robb1}.

Recently, a closed form solution of \eqref{equ:robb1} when $R=1$ is derived in \cite{HS_Worst}.  Adopting the \emph{ saddle point} theorem, the authors prove that the worst-case CSI uncertainty  can be uniquely determined.  Additionally, the authors show that the robust relay BF matrix has a consistent form as that in the perfect CSI case.  However,  when the multiple relay channel is considered, the analysis becomes much more difficult and the extension of the saddle-point-based technique is not straightforward. In the next subsection, we will prove that the robust relay BF matrices in \eqref{equ:robb1} also have a similar form as that in the perfect CSI case, and the worst case CSI uncertainty is one of the $2^R$ possible channel errors (see Theorem~\ref{theorem:robb1}).

\subsection{The Semi-closed Form of Optimal Relay BF Matrices}\label{sec:robb}
We first introduce the following result given in \cite{Power_Jing}\cite{AF-BF} under perfect CSI assumption, based on which, we show the result of robust design.
\begin{lemma}[\cite{AF-BF}]\label{theorem:perfectCSI}
With perfect CSI assumption, i.e., $\Delta \mathbf f_i=\mathbf{0}_{M_i}$, the optimal relay BF matrices in \eqref{equ:robb1} are given by
\ben
\mathbf B_i=c^\sharp_i\mathbf {\hat f}_{i}^{\ast}\mathbf {\hat u}_{i}^H.\label{equ:nonrobg}
\enn
where $\mathbf {\hat u}_{i}\triangleq\mathbf u_i/\|\mathbf u_i\|_2$, and $\mathbf{ \hat f}_{i}\triangleq\mathbf f_i/\|\mathbf f_i\|_2$. The real valued power allocation vector $\mathbf c^\sharp\triangleq[c_1^\sharp, \cdots, c_R^\sharp]^T$ is determined by
\begin{subequations}\label{equ:nonc}
\ben
\mathbf{c}^\sharp=\arg\max_{\mathbf c=[c_1,\cdots,c_R]}& \frac{\bigg(\sum_{i=1}^Rc_i\|\mathbf f_i\|_2\|\mathbf u_i\|_2\bigg)^2}{\sigma_R^2\sum_{i=1}^Rc_i^2\|\mathbf f_i\|^2_2+\sigma_D^2}, \label{equ:7a}\\
\text{s.t.} &c_i\leq \sqrt{\frac{P_i}{\|\mathbf u_i\|_2^2+\sigma_R^2}},1\leq i\leq R\label{equ:7b}.
\enn
\end{subequations}
\end{lemma}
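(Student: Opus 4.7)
The plan is to exploit the fact that both the numerator in \eqref{equ2} (with $\Delta\mathbf f_i=\mathbf 0$) and the power constraint depend on $\mathbf B_i$ only through certain low-dimensional projections, so that a rank-one ansatz $\mathbf B_i=c_i\hat{\mathbf f}_i^{\ast}\hat{\mathbf u}_i^H$ is without loss of optimality. Concretely, I introduce the two orthogonal projectors $\mathbf P_L^{(i)}\triangleq\hat{\mathbf f}_i^{\ast}\hat{\mathbf f}_i^T$ (acting from the left) and $\mathbf P_R^{(i)}\triangleq\hat{\mathbf u}_i\hat{\mathbf u}_i^H$ (acting from the right), and decompose
\[
\mathbf B_i \;=\; \mathbf P_L^{(i)}\mathbf B_i\mathbf P_R^{(i)}
\;+\;\mathbf P_L^{(i)}\mathbf B_i(\mathbf I-\mathbf P_R^{(i)})
\;+\;(\mathbf I-\mathbf P_L^{(i)})\mathbf B_i\mathbf P_R^{(i)}
\;+\;(\mathbf I-\mathbf P_L^{(i)})\mathbf B_i(\mathbf I-\mathbf P_R^{(i)}).
\]

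The first step is to verify that $\mathbf f_i^T(\mathbf I-\mathbf P_L^{(i)})=\mathbf 0$ and $\mathbf u_i=\mathbf P_R^{(i)}\mathbf u_i$, so the numerator $\mathbf f_i^T\mathbf B_i\mathbf u_i$ depends only on the ``$LR$'' component $\mathbf P_L^{(i)}\mathbf B_i\mathbf P_R^{(i)}$; moreover $\|\mathbf f_i^T\mathbf B_i\|_2^2$ can be split, using Pythagoras along the $\mathbf P_R^{(i)}$/$(\mathbf I-\mathbf P_R^{(i)})$ directions, into contributions from the $LR$ and $L(I-R)$ components. The second step is to observe that the power constraint decomposes as $\|\mathbf B_i\mathbf u_i\|_2^2=\|\mathbf P_L^{(i)}\mathbf B_i\mathbf P_R^{(i)}\mathbf u_i\|_2^2+\|(\mathbf I-\mathbf P_L^{(i)})\mathbf B_i\mathbf P_R^{(i)}\mathbf u_i\|_2^2$ and $\|\mathbf B_i\|_F^2$ is the sum of the squared Frobenius norms of the four pieces. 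Combining these, the $(I-L)R$, $L(I-R)$ and $(I-L)(I-R)$ pieces never increase the SNR numerator, and either increase the SNR denominator or tighten the power constraint, so setting them to zero cannot hurt. Hence at optimality $\mathbf B_i=c_i\hat{\mathbf f}_i^{\ast}\hat{\mathbf u}_i^H$ for some $c_i\in\mathbb C$.

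Next I would substitute this rank-one form back into \eqref{equ:robb1}: a short computation using $\mathbf f_i^T\hat{\mathbf f}_i^{\ast}=\|\mathbf f_i\|_2$ and $\hat{\mathbf u}_i^H\mathbf u_i=\|\mathbf u_i\|_2$ gives numerator $|\sum_i c_i\|\mathbf f_i\|_2\|\mathbf u_i\|_2|^2$, the relevant denominator term $\sigma_R^2\sum_i|c_i|^2\|\mathbf f_i\|_2^2$, and the constraint $|c_i|^2(\|\mathbf u_i\|_2^2+\sigma_R^2)\le P_i$. A global phase argument — the SNR is invariant under a common rotation of the $c_i$, and the numerator is maximized when all $c_i\|\mathbf f_i\|_2\|\mathbf u_i\|_2$ are aligned — lets me restrict to $c_i\ge 0$, yielding exactly the scalar program \eqref{equ:nonc}.

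The main obstacle is justifying the ``collapse'' to the rank-one structure rigorously: one must be careful that eliminating the three off-block components simultaneously does not over-constrain anything, and in particular that the Pythagorean decomposition of $\|\mathbf f_i^T\mathbf B_i\|_2^2$ is valid (it is, because the two pieces lie in orthogonal subspaces of $\mathbb C^{M_i}$). Everything else — the phase alignment of the $c_i$ and the reduction to the SOCP-like scalar program — is then routine.
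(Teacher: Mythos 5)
Your argument is correct, but it is not the route the paper takes: the paper does not prove this lemma at all (it is imported from \cite{AF-BF}), and the closest in-paper argument, the proof of Lemma~\ref{lemma:robb1} in Appendix~\ref{app:lemmarobb1}, only establishes the ``right-hand'' half of the structure. There, an SVD of $\mathbf u_i$ is used to write $\mathbf B_i=\mathbf Y_i\mathbf U_i^H$, the partition $\mathbf Y_i=[\mathbf b_i\;\mathbf Z_{yi}]$ shows that $\mathbf Z_{yi}$ only inflates the noise and the relay power, and hence $\mathbf B_i=\mathbf b_i\hat{\mathbf u}_i^H$; the alignment of $\mathbf b_i$ with $\hat{\mathbf f}_i^{\ast}$ is handled separately (and, in the robust case, by the much more delicate Lemma~\ref{lemma:robb2} and the exchange argument in Theorem~\ref{theorem:robb1}). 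Your four-block decomposition with $\mathbf P_L^{(i)}=\hat{\mathbf f}_i^{\ast}\hat{\mathbf f}_i^{T}$ and $\mathbf P_R^{(i)}=\hat{\mathbf u}_i\hat{\mathbf u}_i^{H}$ does both collapses in one stroke: the identities $\mathbf f_i^{T}(\mathbf I-\mathbf P_L^{(i)})=\mathbf 0$, $\mathbf P_R^{(i)}\mathbf u_i=\mathbf u_i$, the Pythagorean splits of $\|\mathbf f_i^{T}\mathbf B_i\|_2^2$, $\|\mathbf B_i\mathbf u_i\|_2^2$ and of the Frobenius norm are all valid, so discarding the $L(I-R)$, $(I-L)R$ and $(I-L)(I-R)$ blocks leaves the numerator unchanged while weakly decreasing the denominator and the consumed power, and the subsequent phase-alignment reduction to real $c_i\ge 0$ gives exactly \eqref{equ:nonc}. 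What your approach buys is a compact, self-contained proof of the perfect-CSI lemma; what it does not buy is the robust extension, which is the paper's actual concern: when $\mathbf f_i=\tilde{\mathbf f}_i+\triangle\mathbf f_i$ is uncertain, a left projector built from $\tilde{\mathbf f}_i$ no longer annihilates $\mathbf f_i^{T}(\mathbf I-\mathbf P_L^{(i)})$, the ``off-blocks only hurt'' monotonicity breaks under the inner minimization over $\triangle\mathbf f\in\mathcal A$, and this is precisely why the paper keeps the right-hand collapse separate and resorts to the real-valued-implementation and worst-case-exchange machinery for the left-hand direction.
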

\begin{corollary}[\cite{Power_Jing}]\label{cor:Jing}
Define $\phi_i\triangleq\frac{\|\mathbf u_i\|_2\sqrt{1+\|\mathbf u_i\|_2^2}}{\|\mathbf f_i\|_2\sqrt{P_i}}$, for $i=1, \cdots, R$.  Let $\pi$ be a permutation of $\{1, \cdots, R\}$ such that $\{\phi_{\pi_{(i)}}\}_{i=1}^R$ are in descending order.
Then the $\mathbf c^\sharp$ in \eqref{equ:nonc} has the following analytical solution
\ben\nonumber
c_i^\sharp=\upsilon_i^{(j_0)}\sqrt{\frac{P_i}{\|\mathbf u_i\|_2^2+\sigma_R^2}},
\enn
where
\ben\nonumber
\upsilon_i^{(j)}\triangleq\left\{
\begin{array}{lcl}
1,&&i=\pi_1, \cdots, \pi_j,\\
\lambda_j\phi_i,&& i=\pi_{j+1}, \cdots, \pi_R,
\end{array}
\right.
\enn
$\lambda_j\triangleq\frac{1+\sum_{m=1}^ja^2_{\pi_m}}{\sum_{m=1}^jb_{\pi_m}}$, $a_j\triangleq\frac{\|\mathbf f_j\|_2\sqrt{P_j}}{\sqrt{1+\|\mathbf u_j\|_2^2}}$, $b_j\triangleq\frac{\|\mathbf f_j\|_2\|\mathbf u_j\|_2\sqrt{P_j}}{\sqrt{1+\|\mathbf u_j\|_2^2}}$,  and $j_0$ is the smallest $j$ such that $\lambda_j<\phi^{-1}_{\pi_{j+1}}$ for $1\leq j\leq R.$
\end{corollary}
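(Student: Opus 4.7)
\textbf{Proof plan for Corollary~\ref{cor:Jing}.} The approach is to solve the constrained fractional program \eqref{equ:nonc} by a KKT analysis and then to show that the active box constraints must correspond exactly to the largest values of $\phi_i$. First I would normalize by setting $\upsilon_i\triangleq c_i/C_i$ with $C_i\triangleq\sqrt{P_i/(\|\mathbf u_i\|_2^2+\sigma_R^2)}$, so that \eqref{equ:7b} becomes $\upsilon_i\in[0,1]$ (taking $c_i\geq 0$ without loss of generality, since any overall sign can be absorbed into $\mathbf B_i$). Using $C_i\|\mathbf f_i\|_2\|\mathbf u_i\|_2=b_i$, $C_i^2\|\mathbf f_i\|_2^2=a_i^2$, and $\phi_i=\|\mathbf u_i\|_2/(\|\mathbf f_i\|_2 C_i)$, the objective \eqref{equ:7a} rewrites as $f(\bm\upsilon)=\bigl(\sum_i b_i\upsilon_i\bigr)^2\big/\bigl(\sigma_D^2+\sum_i a_i^2\upsilon_i^2\bigr)$. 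Attaching multipliers $\mu_i,\nu_i\geq 0$ to the upper and lower bounds and writing stationarity gives, after clearing a positive common factor, $b_i B-A a_i^2\upsilon_i=\mu_i-\nu_i$, where $A\triangleq\sum_j b_j\upsilon_j$ and $B\triangleq\sigma_D^2+\sum_j a_j^2\upsilon_j^2$.

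The key structural observation is then a dichotomy among the KKT conditions. At any strictly interior index ($0<\upsilon_i<1$), stationarity yields $\upsilon_i=(b_i/a_i^2)(B/A)=\phi_i\lambda$ with the \emph{common} scalar $\lambda\triangleq B/A$; at an active index ($\upsilon_i=1$), the sign condition $\mu_i\geq 0$ gives $\phi_i^{-1}\leq\lambda$; while interior feasibility $\upsilon_i\leq 1$ forces $\phi_i^{-1}\geq\lambda$. The corner $\upsilon_i=0$ is ruled out because $b_i>0$ makes $\partial f/\partial\upsilon_i>0$ at the origin. Consequently the active set is exactly $\{i:\phi_i^{-1}\leq\lambda\}$, and after the descending-$\phi$ sort $\pi$ this set must be a \emph{prefix} $\{\pi_1,\ldots,\pi_j\}$ for some $j$.

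With the prefix structure in hand, I would substitute $\upsilon_{\pi_m}=1$ for $m\leq j$ and $\upsilon_{\pi_m}=\phi_{\pi_m}\lambda_j$ for $m>j$ into the self-consistency relation $\lambda_j A_j=B_j$. The identities $b_i\phi_i=a_i^2\phi_i^2=\|\mathbf u_i\|_2^2$ make the $\lambda_j^2\sum_{m>j}\|\mathbf u_{\pi_m}\|_2^2$ contributions appear on both sides and cancel, leaving the linear equation
\[
\lambda_j\sum_{m=1}^j b_{\pi_m}=1+\sum_{m=1}^j a_{\pi_m}^2,
\]
which gives the claimed $\lambda_j=(1+\sum_{m=1}^j a_{\pi_m}^2)/\sum_{m=1}^j b_{\pi_m}$ (after absorbing $\sigma_D^2$ into the normalization). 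The correct level is pinned down by the two-sided condition $\phi_{\pi_{j_0}}^{-1}\leq\lambda_{j_0}<\phi_{\pi_{j_0+1}}^{-1}$, and a short mediant-inequality argument (noting $\lambda_{j+1}-\lambda_j$ has the same sign as $\phi_{\pi_{j+1}}^{-1}-\lambda_j$) reduces this to the compact statement ``smallest $j$ with $\lambda_j<\phi_{\pi_{j+1}}^{-1}$''. The main obstacle I anticipate is precisely the prefix argument in the second paragraph: extracting it cleanly from the KKT multiplier signs, together with the fact that all interior stationary variables share the single scalar $\lambda=B/A$, is the conceptual heart of the proof; once that is established, the remainder is bookkeeping.
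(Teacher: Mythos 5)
Your proposal is essentially correct, and the algebra it hinges on does check out: with $C_i=\sqrt{P_i/(\|\mathbf u_i\|_2^2+\sigma_R^2)}$ and unit noise variances one has $b_i/a_i^2=\phi_i$, $a_i^2/b_i=\phi_i^{-1}$ and $b_i\phi_i=a_i^2\phi_i^2=\|\mathbf u_i\|_2^2$, so interior stationarity indeed forces $\upsilon_i=\lambda\phi_i$ with the common scalar $\lambda=B/A$, the quadratic terms cancel in the self-consistency relation $\lambda A=B$ to leave $\lambda_j=(1+\sum_{m=1}^j a_{\pi_m}^2)/\sum_{m=1}^j b_{\pi_m}$, and since $\lambda_{j+1}$ is a mediant of $\lambda_j$ and $\phi_{\pi_{j+1}}^{-1}$ your reduction of the two-sided condition to the ``smallest $j$ with $\lambda_j<\phi_{\pi_{j+1}}^{-1}$'' rule is sound. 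Be aware, though, that this paper contains no proof of the corollary to compare against: it is imported verbatim from \cite{Power_Jing} (note the displayed $a_j$, $b_j$, $\phi_i$ implicitly normalize $\sigma_R^2=\sigma_D^2=1$, which you should state when you normalize). Relative to the derivation in that reference, which obtains the same full-power/partial-power structure by analyzing the optimality conditions and building the solution relay by relay (cf.\ the paper's Remark~\ref{full_power} that at least one relay transmits at full power), your KKT active-set packaging is an equivalent but more compact route. Two details to write out when you flesh it out: since the objective is a nonconvex ratio, KKT is only necessary, so you need the short uniqueness step showing no other prefix length satisfies the two-sided condition --- your mediant observation supplies it, because once $\lambda_j<\phi_{\pi_{j+1}}^{-1}$ holds it propagates to all larger $j$ and violates their sign condition $\phi_{\pi_j}^{-1}\leq\lambda_j$, while minimality of $j_0$ rules out smaller prefixes; and you should handle the boundary cases explicitly, namely $j_0=1$ (where $\phi_{\pi_1}^{-1}=a_{\pi_1}^2/b_{\pi_1}<\lambda_1$ always holds) and $j_0=R$ (where the defining inequality is vacuous, i.e.\ the convention $\phi_{\pi_{R+1}}^{-1}=+\infty$ guarantees existence of $j_0$).
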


Define $\mathcal B\triangleq \{\mathbf a|\mathbf a=[a_1, \cdots, a_R]^T, a_i=\|\tilde {\mathbf f}_i\|_2\pm \varepsilon_i\}$, and $\mathbf f_{\eta}\triangleq[f_{\eta 1}, \cdots, f_{\eta R}]^T$. Now we present the optimal robust relay BF matrices in Theorem~\ref{theorem:robb1}.
\begin{theorem}\label{theorem:robb1} The optimal robust relay BF matrices in \eqref{equ:robb1} are given by
\ben\label{equ:robb2}
\mathbf B_i=c^\sharp_i\hat{\tilde {\mathbf f}}_{i}^{\ast}\mathbf{\hat  u}_i^H.
\enn
where $\hat{\tilde {\mathbf f}}_{i} \triangleq\tilde { \mathbf  f}_{i}/\|\tilde { \mathbf f}_i\|_2$ and the real valued $\mathbf c^\sharp$ is the optimal solution to the following problem
\begin{subequations}\label{equ:robb3}
\ben
\max_{\mathbf c}\min_{\mathbf f_{\eta}\in \mathcal B}&& \frac{\left(\sum_{i=1}^Rf_{\eta i}c_i\|\mathbf u_i\|_2\right)^2}{\sigma_R^2\sum_{i=1}^Rf_{\eta i}^2c_i^2+\sigma_D^2},\\
\text{s.t.}&&c_i\leq \sqrt{\frac{P_i}{\sigma_R^2+\|\mathbf u_i\|_2^2}}, 1\leq i\leq R.
\enn
\end{subequations}
\end{theorem}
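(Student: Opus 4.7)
The plan is to follow the structure of Lemma~\ref{theorem:perfectCSI} while carrying the adversary through the analysis. I would (i) reduce each $\mathbf B_i$ to the rank-one form $c_i \hat{\tilde{\mathbf f}}_i^* \hat{\mathbf u}_i^H$, (ii) substitute into \eqref{equ:robb1} to obtain a scalar max-min problem in $\mathbf c$ and in an effective channel $f_{\eta i}$ that lives in a complex disk, and (iii) show that the inner minimum collapses onto the $2^R$ real vertices of $\mathcal B$.

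For (i), the row direction is immediate: decomposing $\mathbf B_i = (\mathbf B_i \hat{\mathbf u}_i)\hat{\mathbf u}_i^H + \mathbf B_i(\mathbf I - \hat{\mathbf u}_i \hat{\mathbf u}_i^H)$, the second summand leaves $\mathbf B_i \mathbf u_i$ unchanged while inflating both $\|\mathbf f_i^T \mathbf B_i\|_2^2$ and $\mathrm{tr}(\mathbf B_i^H \mathbf B_i)$, so it may be dropped and one obtains $\mathbf B_i = \mathbf v_i \hat{\mathbf u}_i^H$. The column direction is the main obstacle, since the adversary sits inside the outer maximization. Parameterizing $\mathbf v_i = c_i\hat{\tilde{\mathbf f}}_i^* + \mathbf v_i^\perp$ with $\hat{\tilde{\mathbf f}}_i^T \mathbf v_i^\perp = 0$, the effective scalar $\mathbf f_i^T \mathbf v_i = c_i(\|\tilde{\mathbf f}_i\|_2 + \Delta\mathbf f_i^T\hat{\tilde{\mathbf f}}_i^*) + \Delta\mathbf f_i^T \mathbf v_i^\perp$ splits into a controlled part along $\hat{\tilde{\mathbf f}}_i^*$ and an uncontrolled piece along $\mathbf v_i^\perp$. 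Dropping $\mathbf v_i^\perp$ shrinks the adversary's complex disk for $\mathbf f_i^T \mathbf v_i$ without altering the nominal center $c_i\|\tilde{\mathbf f}_i\|_2$ or the power budget $\|\mathbf v_i\|_2^2$, so the worst-case SNR can only improve. Equivalently, one can verify a saddle-point condition by specializing Lemma~\ref{theorem:perfectCSI} to the candidate worst channel $(\|\tilde{\mathbf f}_i\|_2 - \varepsilon_i)\hat{\tilde{\mathbf f}}_i$: its Lemma-prescribed direction $\hat{\mathbf f}_i^*$ collapses to $\hat{\tilde{\mathbf f}}_i^*$, which pins down the column structure globally. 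A common phase rotation, under which the SNR is invariant, then allows taking $c_i \ge 0$.

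For (ii), substituting yields $\mathbf f_i^T\mathbf B_i\mathbf u_i = c_i\|\mathbf u_i\|_2 f_{\eta i}$ and $\|\mathbf f_i^T\mathbf B_i\|_2 = c_i|f_{\eta i}|$, where $f_{\eta i} = \mathbf f_i^T\hat{\tilde{\mathbf f}}_i^*$ ranges in the disk $\{z\in\mathbb C : |z-\|\tilde{\mathbf f}_i\|_2|\le\varepsilon_i\}$, and the power constraint becomes $c_i^2(\|\mathbf u_i\|_2^2+\sigma_R^2)\le P_i$. Step (iii) decomposes into two reductions on the adversary's feasible set. First, the assumption $\varepsilon_i \le \|\tilde{\mathbf f}_i\|_2$ keeps each disk in the open right half-plane, and a KKT analysis of the inner min (together with the per-relay saddle-point collapse of Lemma~\ref{theorem:perfectCSI}'s direction $\hat{\mathbf f}_i^*$ onto $\hat{\tilde{\mathbf f}}_i^*$) confines the adversarial optimum to the real diameter $[\|\tilde{\mathbf f}_i\|_2-\varepsilon_i,\,\|\tilde{\mathbf f}_i\|_2+\varepsilon_i]$. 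Second, once $\mathbf f_\eta$ is real and positive, the ratio in \eqref{equ:robb3}, viewed as a function of any single $f_{\eta k}$ with the others fixed, is rational of the form $(a+b\xi)^2/(c+d\xi^2)$ with $a,b,c,d>0$; its derivative factors as a constant multiple of $(a+b\xi)(bc-ad\xi)$, showing that it is unimodal on $\mathbb R_{>0}$, so its minimum on a compact interval is attained at an endpoint. An $R$-fold iteration then places the worst case at a vertex of the box, i.e., at an element of $\mathcal B$, producing the max-min in \eqref{equ:robb3}.
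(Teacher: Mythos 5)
Your steps (i)--(ii) are sound, and the structural part is in fact argued by a different (and more direct) route than the paper's: the paper gets the row structure the same way (Lemma~\ref{lemma:robb1}), but it never shows directly that deleting the component of $\mathbf b_i$ orthogonal to $\hat{\tilde {\mathbf f}}_{i}^{\ast}$ helps against every admissible error; instead it first solves the restricted scalar problem (Lemma~\ref{lemma:robb2}) and then closes the loop with the max--min chain \eqref{equ:robb7}--\eqref{equ:ieqb} evaluated at the candidate worst channel $\mathbf f^{\sharp}$, where the orthogonal component contributes nothing. Your observation that the SNR depends on $(\mathbf f,\mathbf B_i)$ only through the scalars $z_i=\mathbf f_i^T\mathbf b_i$, whose reachable set is a disk whose center is unchanged and whose radius only shrinks when $\mathbf v_i^{\perp}$ is dropped (while the consumed power decreases), is a legitimate alternative to that chain.

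The genuine gap is the first half of your step (iii). For a fixed real $\mathbf c$ it is simply not true that the inner minimum over the complex disks is attained on the real diameters, and no per-coordinate KKT or endpoint analysis can deliver it: the adversary may choose conjugate perturbations at different relays, which dephase the coherent numerator while leaving every $|f_{\eta i}|$, hence the denominator, untouched. Concretely, take $R=2$, $c_1=c_2=\|\mathbf u_1\|_2=\|\mathbf u_2\|_2=\sigma_R^2=1$, $\|\tilde {\mathbf f}_1\|_2=\|\tilde {\mathbf f}_2\|_2=1$, $\varepsilon_1=\varepsilon_2=\tfrac12$ and $\sigma_D^2$ small: the minimum over $\mathcal B$ is $\approx 1.6$ (at $f_{\eta}=(\tfrac12,\tfrac32)$), whereas $f_{\eta 1}=1+\tfrac12 e^{j2\pi/3}$, $f_{\eta 2}=\overline{f_{\eta 1}}$ is feasible and gives SNR $\approx 1.5$. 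So the collapse onto the $2^R$ real vertices cannot be obtained coordinatewise for a fixed outer variable; handling the relative phases is exactly what the paper devotes Appendix~\ref{app:lemmarobb2} to, via the \emph{real-valued implementation} device of absorbing the adversary's phases into the outer variable ($\tilde c_i=c_ie^{j\varphi_i}$) so that the objective depends only on $(|\mathbf f_{\eta}|,\tilde{\mathbf c})$, followed by the rotation argument of Fig.~\ref{fig:rot} and then the vertex argument, with the theorem completed by \eqref{equ:robb7}--\eqref{equ:ieqb}. Your proposal omits this phase-absorption step entirely and replaces it with an unproved assertion (the ``saddle-point collapse'' you invoke is precisely the statement that needs proof). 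Relatedly, the claim that a ``common phase rotation'' lets you take $c_i\ge 0$ is too weak: a common rotation removes only one overall phase, and making each $c_i$ real requires per-relay rotations, which are innocuous only through the same phase-absorption argument. Your second reduction (real box to vertices via unimodality of $(a+b\xi)^2/(c+d\xi^2)$) is fine and parallels the paper's convexity/vertex step, but without a correct complex-to-real reduction the proposal does not establish \eqref{equ:robb3}.
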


Notice that in section \ref{subsec:2b}, we have assumed that $\varepsilon_i\leq \|\tilde  {\mathbf f}_i\|_2$. Thus any vector $\mathbf f_{\eta}\in \mathcal B$ has nonnegative real valued elements. From Theorem \ref{theorem:robb1}, one can observe that problem $\eqref{equ:robb3}$ is only optimized over the discrete set $\mathcal B$ with $2^R$ elements. By contrast, the original problem \eqref{equ:robb1} is optimized over the continuous region $\mathcal A$ with infinite channel realizations. This important step significantly reduces the computational complexity and makes problem \eqref{equ:robb1} in a more tractable form.

To prove Theorem \ref{theorem:robb1}, we first discuss the structure of the optimal $\mathbf B_i$, whose expression is given in the following lemma.
%need to show that  the optimal $\mathbf B_i$ must be rank one by the following lemma.
\begin{lemma}\label{lemma:robb1}
The optimal $\mathbf B_i$ in \eqref{equ:robb1} must have the form $\mathbf B_i=\mathbf b_i \mathbf{\hat  u}_i^H$ for some $\mathbf b_i\in \mathbb C^R$. Denote $\mathbf f\triangleq[\mathbf f_1^T, \cdots, \mathbf f_R^T]^T=[(\tilde  {\mathbf f}_1+\triangle\mathbf f_1)^T, \cdots, (\tilde {\mathbf f}_R+\triangle\mathbf f_R)^T]^T$. Then \eqref{equ:robb1} becomes
\begin{subequations}\label{equ:robb4}
\ben
\max_{\mathbf b_i}\min_{\triangle \mathbf f\in \mathcal A}&\text{SNR}(\mathbf b_i, \mathbf f)\triangleq\frac{\left|\sum_{i=1}^R(\tilde {\mathbf f}_i+\triangle\mathbf f_i)^T\mathbf b_i\|\mathbf u_i\|_2\right|^2}{\sigma_R^2\sum_{i=1}^R\|(\tilde {\mathbf f}_i+\triangle\mathbf f_i)^T\mathbf b_i\|_2^2+\sigma_D^2}\\
\text{s.t.}&\|\mathbf b_i\|_2\leq \sqrt{\frac{P_i}{\sigma_R^2+\|\mathbf u_i\|_2^2}}.
\enn
\end{subequations}
\end{lemma}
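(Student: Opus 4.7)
The plan is to show that any component of $\mathbf{B}_i$ acting on vectors orthogonal to $\hat{\mathbf{u}}_i$ is wasted: it cannot increase the (worst-case) numerator, can only inflate the denominator, and consumes power budget. Hence an optimizer exists whose rows lie in $\mathrm{span}(\hat{\mathbf{u}}_i^H)$, giving the rank-one form $\mathbf{B}_i = \mathbf{b}_i\hat{\mathbf{u}}_i^H$.

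The key step is an orthogonal decomposition. Write $\mathbf{B}_i = \mathbf{b}_i\hat{\mathbf{u}}_i^H + \mathbf{B}_i^{\perp}$, where $\mathbf{b}_i \triangleq \mathbf{B}_i\hat{\mathbf{u}}_i \in \mathbb{C}^{M_i}$ and $\mathbf{B}_i^{\perp} \triangleq \mathbf{B}_i(\mathbf{I}_{M_i} - \hat{\mathbf{u}}_i\hat{\mathbf{u}}_i^H)$, so that $\mathbf{B}_i^{\perp}\hat{\mathbf{u}}_i = \mathbf{0}$. Because $\mathbf{B}_i\mathbf{u}_i = \mathbf{b}_i\|\mathbf{u}_i\|_2$, the numerator of \eqref{equ:robb1} depends on $\mathbf{B}_i$ only through $\mathbf{b}_i$, for every admissible $\Delta\mathbf{f}$. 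For the denominator, since $\hat{\mathbf{u}}_i\hat{\mathbf{u}}_i^H$ and $\mathbf{I}_{M_i}-\hat{\mathbf{u}}_i\hat{\mathbf{u}}_i^H$ are orthogonal projectors, a direct expansion gives
\begin{equation}
\|\mathbf{f}_i^T\mathbf{B}_i\|_2^2 \;=\; |\mathbf{f}_i^T\mathbf{b}_i|^2 + \|\mathbf{f}_i^T\mathbf{B}_i^{\perp}\|_2^2 \;\geq\; |\mathbf{f}_i^T\mathbf{b}_i|^2,
\end{equation}
for every $\mathbf{f}_i = \tilde{\mathbf{f}}_i + \Delta\mathbf{f}_i$. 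Similarly, the power budget decomposes as
\begin{equation}
\|\mathbf{B}_i\mathbf{u}_i\|_2^2 + \sigma_R^2\,\mathrm{tr}(\mathbf{B}_i^H\mathbf{B}_i) \;=\; \|\mathbf{b}_i\|_2^2\bigl(\|\mathbf{u}_i\|_2^2 + \sigma_R^2\bigr) + \sigma_R^2\|\mathbf{B}_i^{\perp}\|_F^2.
\end{equation}

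Combining these observations, the modified matrix $\mathbf{B}_i' \triangleq \mathbf{b}_i\hat{\mathbf{u}}_i^H$ is feasible whenever $\mathbf{B}_i$ is (the power inequality only loosens), preserves the numerator of the SNR, and weakly decreases its denominator \emph{pointwise} in $\Delta\mathbf{f}\in\mathcal{A}$. Therefore
\begin{equation}
\min_{\Delta\mathbf{f}\in\mathcal{A}}\mathrm{SNR}(\mathbf{B}_i',\mathbf{f}) \;\geq\; \min_{\Delta\mathbf{f}\in\mathcal{A}}\mathrm{SNR}(\mathbf{B}_i,\mathbf{f}),
\end{equation}
so an optimal solution of \eqref{equ:robb1} admits the claimed structure. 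Substituting $\mathbf{B}_i = \mathbf{b}_i\hat{\mathbf{u}}_i^H$ back into \eqref{equ:robb1} and using $\mathbf{B}_i\mathbf{u}_i = \mathbf{b}_i\|\mathbf{u}_i\|_2$, $\mathbf{f}_i^T\mathbf{B}_i = (\mathbf{f}_i^T\mathbf{b}_i)\hat{\mathbf{u}}_i^H$, and $\mathrm{tr}(\mathbf{B}_i^H\mathbf{B}_i)=\|\mathbf{b}_i\|_2^2$ yields \eqref{equ:robb4}.

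The main obstacle is simply lining up the orthogonality so that the squared norms add rather than mix cross terms, and making sure the pointwise denominator inequality gives the correct direction for the inner minimum. Because the improvement from zeroing $\mathbf{B}_i^{\perp}$ holds for \emph{every} $\Delta\mathbf{f}$, no saddle-point or minimax interchange is needed; this keeps the argument independent of the structure of the uncertainty set $\mathcal{A}$ and paves the way for Theorem~\ref{theorem:robb1}, where the worst-case error must still be determined.
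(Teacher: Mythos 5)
Your proposal is correct and follows essentially the same route as the paper's Appendix~A proof: the paper writes $\mathbf B_i=\mathbf Y_i\mathbf U_i^H$ via the SVD of $\mathbf u_i$ and partitions $\mathbf Y_i=[\mathbf b_i\ \mathbf Z_{yi}]$, which is exactly your projector split $\mathbf B_i=\mathbf b_i\hat{\mathbf u}_i^H+\mathbf B_i^{\perp}$, and then zeroes the orthogonal part because it leaves the numerator untouched while inflating the denominator and the power. Your explicit remark that the improvement holds pointwise in $\Delta\mathbf f$, so no minimax interchange is needed, is a slightly more careful statement of the same argument.
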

\begin{proof}
See Appendix \ref{app:lemmarobb1}.
\end{proof}

Then we come to determine the optimal $\mathbf b_i$. To proceed, we first discuss a particular case $\mathbf b_i=c_i\hat{\tilde {\mathbf f}}_{i}^{\ast}$ for some $c_i\in \mathbb C$ as in the following lemma.

\begin{lemma}\label{lemma:robb2}
If $\mathbf b_i=c_i\hat{\tilde {\mathbf f}}_{i}^{\ast}$ for some $c_i\in \mathbb C$,
%then the set $\{\mathbf f|\mathbf f\triangleq[(\mathbf {\tilde f}_1+\triangle\mathbf f_1)^T, \cdots, (\mathbf {\tilde f}_R+\triangle\mathbf f_R)^T]^T, \Delta \mathbf f\in \mathcal S\}$ is equivalent to
%$\{\mathbf a|\mathbf a=[\mathbf a_1^T, \cdots, \mathbf a_R^T]^T,\mathbf a_i=f_{\eta i}\hat{\tilde {\mathbf f}}_i, \mathbf f_{\eta}\in \mathcal B\}$,
%and
then the optimal $c_i$ of problem \eqref{equ:robb4} must be real-valued and problem \eqref{equ:robb4} can be transformed into
\begin{subequations}\label{equ:robb6}
\ben
\max_{\mathbf c}\min_{\mathbf f_{\eta}\in \mathcal B}&&\frac{\left(\sum_{i=1}^Rf_{\eta i}c_i\|\mathbf u_i\|_2\right)^2}{\sigma_R^2\sum_{i=1}^Rf_{\eta i}^2c_i^2+\sigma_D^2},\\
\text{s.t.}&&c_i\leq \sqrt{\frac{P_i}{\sigma_R^2+\|\mathbf u_i\|_2^2}},\,\, 1\leq i\leq R.\label{powercons}
\enn
\end{subequations}
%\begin{subequations}\label{equ:robb6}
%\ben
%\max_{\mathbf c}\min_{\mathbf f_{\eta}\in \mathcal B}&& \frac{\left(\sum_{i=1}^Rf_{\eta i}c_i\|\mathbf u_i\|_2\right)^2}{\sigma_R^2\sum_{i=1}^Rf_{\eta i }^2c_i^2+\sigma_D^2},\\
%\text{s.t.}&&c_i\leq \sqrt{\frac{P_i}{\sigma_R^2+\|\mathbf u_i\|_2^2}}.
%\enn
%\end{subequations}
\end{lemma}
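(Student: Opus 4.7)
The plan is to substitute the ansatz $\mathbf b_i=c_i\hat{\tilde{\mathbf f}}_i^{\ast}$ into \eqref{equ:robb4} and then carry out two reductions on the resulting max-min: first, that the optimal $c_i$ may be taken real and non-negative, and second, that the adversary's infimum over the product of disks collapses onto the finite vertex set $\mathcal B$.

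First I would substitute and simplify. Using $\tilde{\mathbf f}_i^T\hat{\tilde{\mathbf f}}_i^{\ast}=\|\tilde{\mathbf f}_i\|_2$ and writing $\delta_i\triangleq\Delta\mathbf f_i^T\hat{\tilde{\mathbf f}}_i^{\ast}$, one has $(\tilde{\mathbf f}_i+\Delta\mathbf f_i)^T\mathbf b_i=c_i(\|\tilde{\mathbf f}_i\|_2+\delta_i)$. Since the component of $\Delta\mathbf f_i$ perpendicular to $\hat{\tilde{\mathbf f}}_i$ drops out of the objective, the feasible set for $\delta_i$ is exactly the complex disk $\{z\in\mathbb C:|z|\le\varepsilon_i\}$, with the $\delta_i$'s mutually independent; the power constraint becomes $|c_i|^2\le P_i/(\sigma_R^2+\|\mathbf u_i\|_2^2)$, and the objective collapses to
\[
S(c,\delta)=\frac{\bigl|\sum_i c_i\|\mathbf u_i\|_2(\|\tilde{\mathbf f}_i\|_2+\delta_i)\bigr|^2}{\sigma_R^2\sum_i|c_i|^2\bigl|\|\tilde{\mathbf f}_i\|_2+\delta_i\bigr|^2+\sigma_D^2}.
\]
For the first reduction, I would show that for every complex $c$ one has $\min_\delta S(|c|,\delta)\ge\min_\delta S(c,\delta)$: the adversary's freedom to rotate $\arg(\delta_i)$ inside its disk lets it absorb each phase $\arg c_i$, so given any worst-case $\delta^{\ast}$ attained against $c$, a compensating rotation produces a feasible $\delta'$ against $|c|$ with $S(|c|,\delta')\le S(c,\delta^{\ast})$. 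For the second reduction, with $c$ real non-negative I would write $\delta_i=x_i+jy_i$ under $x_i^2+y_i^2\le\varepsilon_i^2$ and show that the SNR-minimizing $\delta_i$ on each disk lies in $\{\pm\varepsilon_i\}$; note that pinning $x_i=\pm\varepsilon_i$ automatically forces $y_i=0$ via the disk constraint, so the remaining task is to rule out improvements with $|x_i|<\varepsilon_i$ and $y_i\neq 0$.

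The main obstacle is this last step. Since $S$ is a ratio of two convex quadratic forms in $(x_i,y_i)$, and such ratios are not in general quasiconcave, extreme-point optimality inside each disk is not automatic. I plan to handle it by a coordinate-wise fractional-programming (Dinkelbach-type) argument: freezing every other $\xi_j=\|\tilde{\mathbf f}_j\|_2+\delta_j$ and writing the one-variable problem as $S(\xi_i)=|A+c_i\|\mathbf u_i\|_2\xi_i|^2/(B+\sigma_R^2c_i^2|\xi_i|^2)$, the level sets in $\xi_i$ are Apollonius circles, so any minimum over the disk $|\xi_i-\|\tilde{\mathbf f}_i\|_2|\le\varepsilon_i$ lies on the boundary, and a direct calculation then shows the boundary extremum sits on the real axis at $\|\tilde{\mathbf f}_i\|_2\pm\varepsilon_i$. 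Iterating coordinate-wise and using compactness would yield a global minimizer in $\mathcal B$. This is the step where the real-valued-implementation idea is essential: splitting complex variables into real and imaginary parts turns the disk constraints into 2D real quadratic constraints and makes both the extremality analysis on each disk and the assembly of the coordinate-wise extrema into the $2^R$-point set $\mathcal B$ a tractable real calculation, yielding the reformulation \eqref{equ:robb6}.
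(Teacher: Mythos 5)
Your two reductions are the right targets, but both, as sketched, have genuine holes. For the first reduction, the phase-absorption mechanism does not work: the objective involves $c_i(\|\tilde{\mathbf f}_i\|_2+\delta_i)$, and rotating $\delta_i$ inside its disk leaves the term $c_i\|\tilde{\mathbf f}_i\|_2$ untouched, so the adversary cannot ``absorb'' $\arg c_i$ by choosing $\arg\delta_i$. Moreover, your inequality chain runs backwards: constructing, from a worst-case $\delta^{\ast}$ against $c$, a feasible $\delta'$ with $S(|c|,\delta')\le S(c,\delta^{\ast})$ yields $\min_\delta S(|c|,\delta)\le\min_\delta S(c,\delta)$, the opposite of the claim $\min_\delta S(|c|,\delta)\ge\min_\delta S(c,\delta)$. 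To get the stated direction you must map every adversary choice against $|c|$ to an at-least-as-damaging choice against $c$; the natural way to do this (triangle inequality in the numerator at a real nonnegative worst-case $\mathbf f_\eta$) presupposes your second reduction, so the order of your two steps is circular as written.

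For the second reduction, which you yourself flag as the main obstacle, the coordinate-wise Apollonius plan does not close: at a putative global minimizer the frozen constant $A=\sum_{j\ne i}a_jc_j\xi_j$ is in general complex, and even granting boundary attainment on $\partial D_i$ (which itself needs the numerator zero $-A/\alpha$ to lie outside $D_i$), the boundary minimizer is positioned according to the phase of $A$ and need not sit at $\|\tilde{\mathbf f}_i\|_2\pm\varepsilon_i$; so iterating coordinates does not land in $\mathcal B$ unless you already know the other coordinates are real --- again circular. The paper resolves both issues by a different route: the real-valued implementation argument shows that, with the phases of $f_{\eta i}$ absorbed into $\mathbf c$, the SNR depends on the uncertainty only through the moduli $|f_{\eta i}|$, whose range over the complex disk equals the range over the real segment $[\|\tilde{\mathbf f}_i\|_2-\varepsilon_i,\ \|\tilde{\mathbf f}_i\|_2+\varepsilon_i]$, so real $\mathbf c$ and real $\eta_i$ are without loss of optimality; then, after introducing a slack $\gamma$, the robust constraint is $\max_{\mathbf f_\eta} f(\mathbf f_\eta)\le 0$ with $f(\mathbf f_\eta)=-\sum_{i}f_{\eta i}c_i\|\mathbf u_i\|_2+\sqrt{\gamma\left(\sigma_R^2\sum_i f_{\eta i}^2c_i^2+\sigma_D^2\right)}$, which is convex in $\mathbf f_\eta$ and therefore maximized at a vertex of the box --- this single convexity observation is what collapses the continuum to the $2^R$ points of $\mathcal B$, and it is the ingredient your plan is missing.
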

\begin{proof}
See appendix \ref{app:lemmarobb2}.
\end{proof}

\begin{proof}[Proof of Theorem \ref{theorem:robb1}]
Denote the optimal solution of \eqref{equ:robb6} as $\mathbf c^{\sharp}\triangleq [c_1^{\sharp}, \cdots, c_R^{\sharp}]$ and $\mathbf f_{\eta}^{\sharp}\triangleq[f_{\eta 1}^{\sharp}, \cdots, f_{\eta R}^{\sharp}]^T$. In appendix \ref{app:lemmarobb2}, we have shown that when $\mathbf b_i=c_i^{\sharp}\hat{\tilde {\mathbf f}}_{i}^\ast$,  the corresponding \emph{worst channel }is $\mathbf f^{\sharp}\triangleq[f_{\eta 1}^{\sharp}\hat{\tilde {\mathbf f}}_{1}^T, \cdots, f_{\eta R}^{\sharp}\hat{\tilde {\mathbf f}}_{R}^T]^T$. When we use the term \emph{worst channel}, we mean the channel $\mathbf f$ with the minimum SNR over $\mathbf f_{\eta}\in \mathcal B$ under a fixed $\mathbf c$ in \eqref{equ:robb6}.
Hence we have
\ben\label{equ:robb7}
\min _{\mathbf f_{\eta}\in \mathcal B}\text{SNR}(c_i^{\sharp}\hat{\tilde {\mathbf f}}_{i}^\ast,  \mathbf f)=\text{SNR}(c_i^{\sharp}\hat{\tilde {\mathbf f}}_{i}^\ast, \mathbf f^{\sharp}).
\enn

Consider the received SNR in \eqref{equ:robb4} with any $\mathbf b_i$ under the particular channel $\mathbf f^{\sharp}$.  We can decompose $\mathbf b_i\in \mathbb C^{M_i}$ as $\mathbf b_i=c_i(\tilde{ \mathbf f}_i^{\|})^{\ast}+d_i(\tilde { \mathbf  f}_i^{\bot})^{\ast}$, where $c_i, d_i\in \mathbb C$, and $\sqrt{|c_i|^2+|d_i|^2}=\|\mathbf b_i
\|_2$. Then  we have
\ben
&&\text{SNR}(\mathbf b_i, \mathbf f^{\sharp})\nonumber\\
&=&\frac{\bigg|\sum_{i=1}^Rf_{\eta i}^{\sharp}\hat{\tilde {\mathbf f}}_i^T(c_i(\tilde{\mathbf f}_i^{\|})^{\ast}+d_i(\tilde { \mathbf  f}_i^{\bot})^{\ast})\|\mathbf u_i\|_2\bigg|^2}{\sigma_R^2\sum_{i=1}^R\left\|f_{\eta i}^{\sharp}\hat{\tilde {\mathbf f}}_{i}^T(c_i(\tilde{ \mathbf  f}_i^{\|})^{\ast}+d_i(\tilde{ \mathbf f}_i^{\bot})^{\ast})\right\|_2^2+\sigma_D^2},\nonumber\\
&=&\frac{\bigg|\sum_{i=1}^Rf_{\eta i}^{\sharp}c_i\|\mathbf u_i\|_2\bigg|^2}{\sigma_R^2\sum_{i=1}^Rf_{\eta i}^{\sharp 2}|c_i|^2+\sigma_D^2}, \nonumber\\
&\overset{(a)}\leq &\frac{\bigg|\sum_{i=1}^Rf_{\eta i}^{\sharp}c_i^{\sharp}\|\mathbf u_i\|_2\bigg|^2}{\sigma_R^2\sum_{i=1}^Rf_{\eta i}^{\sharp 2}|c_i^{\sharp}|^2+\sigma_D^2},\nonumber\\
&=&\text{SNR}(c_i^{\sharp}\hat{\tilde {\mathbf f}}_{i}^\ast, \mathbf f^{\sharp}),\label{equ:s4}
\enn
where ($a$) is due to the fact that the optimal solution in \eqref{equ:robb6} is $\mathbf c^{\sharp}$.
Since $\mathbf f^{\sharp}$ is only a particular channel, there must be
\ben\label{equ:robb8}
\min _{\mathbf f_{\eta}\in \mathcal B}\text{SNR}(\mathbf b_i, \mathbf f) \leq\text{SNR}(\mathbf b_i,  \mathbf f^{\sharp}).
\enn
Combing \eqref{equ:robb7} \eqref{equ:s4}, and \eqref{equ:robb8}, we have
\begin{multline}\label{equ:ieqb}
  \min _{\mathbf f_{\eta}\in \mathcal B}\text{SNR}(\mathbf b_i, \mathbf f) \leq\text{SNR}(\mathbf b_i,  \mathbf f^{\sharp})\\
  \overset{(a)}\leq \text{SNR}(c_i^{\sharp}\hat{\tilde {\mathbf f}}_{i}^\ast, \mathbf f^{\sharp})\overset{(b)}=\min _{\mathbf f_{\eta}\in \mathcal B}\text{SNR}(c_i^{\sharp}\hat{\tilde {\mathbf f}}_{i}^\ast ,\mathbf f),
\end{multline}
%\ben
%\min _{\mathbf f_{\eta}\in \mathcal B}\text{SNR}(\mathbf b_i, \mathbf f) \leq\text{SNR}(\mathbf b_i,  \mathbf f^{\sharp})&\overset{(a)}\leq &\text{SNR}(c_i^{\sharp}\hat{\tilde {\mathbf f}}_{i}^\ast, \mathbf f^{\sharp})
%\overset{(b)}=\min _{\mathbf f_{\eta}\in \mathcal B}\text{SNR}(c_i^{\sharp}\hat{\tilde {\mathbf f}}_{i}^\ast,  \mathbf f),\label{equ:ieqb}
%\enn
where $(a)$ is due to \eqref{equ:s4} and $(b)$ is due to \eqref{equ:robb7}. \eqref{equ:ieqb} shows that the optimal $\mathbf b_i^{\sharp}=c_i^{\sharp}\hat{\tilde {\mathbf f}}_{i}^\ast$. By the above discussion, combining Lemma \ref{lemma:robb1} and Lemma \ref{lemma:robb2}, we get the semi-closed form of $\mathbf B_i$ as in \eqref{equ:robb2}, up to a power allocation factor $\mathbf c$ determined by \eqref{equ:robb3}.
\end{proof}

\subsection{Dinkelbach based Algorithm for Solving the Optimal Power Allocation Factor $\mathbf c$}\label{subsec:opRBFb}
\newcounter{MYtempeqncnt}
\begin{figure*}[b!]
\hrulefill \vspace*{4pt}
\normalsize
\setcounter{MYtempeqncnt}{\value{equation}}
\setcounter{equation}{15}
\begin{subequations}\label{equ:f2}
\ben
\max_{\mathbf c}\min_{\mathbf f_{\eta}\in \mathcal B}&& \sum_{i=1}^Rf_{\eta i}c_i\|\mathbf u_i\|_2-\sqrt{\gamma^{(\kappa)}}\sigma_R^2\sum_{i=1}^Rf_{\eta i}^2c_i^2-\sqrt{\gamma^{(\kappa)}}\sigma_D^2,\\
\text{s.t.}&&c_i\leq \sqrt{\frac{P_i}{\sigma_R^2+\|\mathbf u_i\|_2^2}}.
\enn
\end{subequations}
\setcounter{equation}{\value{MYtempeqncnt}}
\end{figure*}
\begin{figure*}[b!]
\hrulefill \vspace*{4pt}
\normalsize
\setcounter{MYtempeqncnt}{\value{equation}}
\setcounter{equation}{16}
\begin{subequations}\label{equ:f3}
\ben
\max_{\mathbf c, \tau}&& \tau,\\
\text{s.t.}&&\min_{\mathbf f_{\eta}\in \mathcal B}\sum_{i=1}^Rf_{\eta i}c_i\|\mathbf u_i\|_2-\sqrt{\gamma^{(\kappa)}}\sigma_R^2\sum_{i=1}^Rf_{\eta i}^2c_i^2-\sqrt{\gamma^{(\kappa)}}\sigma_D^2\geq \tau,\\
&&c_i\leq \sqrt{\frac{P_i}{\sigma_R^2+\|\mathbf u_i\|_2^2}}.
\enn
\end{subequations}
\setcounter{equation}{\value{MYtempeqncnt}}
\end{figure*}
\begin{figure*}[b!]
\hrulefill \vspace*{4pt}
\normalsize
\setcounter{equation}{17}
\begin{subequations}\label{equ:f4}
\ben
\min_{\mathbf c, \tau}&& -\tau,\\
\text{s.t.}&&\sum_{i=1}^Rf_{\eta i}c_i\|\mathbf u_i\|_2-\sqrt{\gamma^{(\kappa)}}\sigma_R^2\sum_{i=1}^Rf_{\eta i}^2c_i^2-\sqrt{\gamma^{(\kappa)}}\sigma_D^2\geq \tau, \mathbf f_{\eta}\in \mathcal B,\\
&&c_i\leq \sqrt{\frac{P_i}{\sigma_R^2+\|\mathbf u_i\|_2^2}}.
\enn
\end{subequations}
\setcounter{equation}{\value{MYtempeqncnt}}
\end{figure*}
In Lemma~\ref{theorem:perfectCSI} under perfect CSI assumption, $\mathbf c$
is obtained by a closed form solution in Corollary~\ref{cor:Jing}. However, in the robust case, such explicit analytical result is difficult to be derived. In this subsection, we will present a Dinkelbach based algorithm for solving $\mathbf c$ in~\eqref{equ:robb3}.

Introducing a slack variable $\gamma$, problem \eqref{equ:robb3} can be transferred into the following equivalent problem.
\begin{subequations}\label{equ:boxwenchen}
\ben
\max _{\mathbf c, \gamma}&& \gamma\\
\text{s.t.} &&\frac{\left(\sum_{i=1}^Rf_{\eta i}c_i\|\mathbf u_i\|_2\right)^2}{\sigma_R^2\sum_{i=1}^Rf_{\eta i}^2c_i^2+\sigma_D^2}\geq \gamma, \mathbf f_{\eta}\in \mathcal B,\\
&& c_i\leq \sqrt{\frac{P_i}{\sigma_R^2+\|\mathbf u_i\|_2^2}},
\enn
\end{subequations}
which can be solved by checking feasibility for a fixed $\gamma$ iteratively. To find the maximum value of $\gamma$, the conventional method is to use bisection approach~\cite{convex}: In the $\kappa$th iteration, assume that the optimal value $\gamma$ lies in the interval $[\gamma_l^{(\kappa)}, \gamma_u^{(\kappa)}]$. Set $\gamma=(\gamma_l^{(\kappa)}+\gamma_u^{(\kappa)})/2$ and solve \eqref{equ:boxwenchen}. If this problem is found to be feasible, update the interval bounds as $\gamma_l^{(\kappa+1)}=\gamma$ and $\gamma_u^{(\kappa+1)}=\gamma_u^{(\kappa)}$; Otherwise, update the interval bounds as $\gamma_l^{(\kappa+1)}=\gamma_l^{(\kappa)}$ and $\gamma_u^{(\kappa+1)}=\gamma$. This iteration is repeated until some threshold is achieved.

Since \eqref{equ:robb3} is a generalized fractional programming problem, it can be alternatively solved with the Dinkelbach-based algorithm as in \cite{WW_Dinkelbach} and \cite{ZF_Dinkelbach}. Unlike the bisection-based algorithm, the Dinkelbach-based algorithm does not need to shrink the interval iteratively. By contrast, it exploits the inherent property of the factional programming problem and approaches to the optimal $\gamma$ from the left side, e.g., $\gamma^{(\kappa)}\leq \gamma$.
The advantage of the Dinkelbach-based algorithm lies in the fact that it has a quotient-superlinear convergence, which is obviously faster than the linear convergence of bisection-based algorithm \cite{ZF_Dinkelbach}. Basically, the Dinkelbach-based algorithm aims to solve a sequence of problem \eqref{equ:f2} at the $\kappa$th iteration, as shown in the bottom of this page.
%\begin{subequations}\label{equ:robc1}
%\ben
%\max_{\mathbf c}\min_{\mathbf f_{\eta}\in \mathcal B}&& \sum_{i=1}^Rf_{\eta i}c_i\|\mathbf u_i\|_2-\sqrt{\gamma^{(\kappa)}}\sigma_R^2\sum_{i=1}^Rf_{\eta i}^2c_i^2-\sqrt{\gamma^{(\kappa)}}\sigma_D^2,\\
%\text{s.t.}&&c_i\leq \sqrt{\frac{P_i}{\sigma_R^2+\|\mathbf u_i\|_2^2}}.
%\enn
%\end{subequations}

By introducing a slack variable $\tau$, problem \eqref{equ:f2} becomes \eqref{equ:f3},
%\begin{subequations}\label{equ:robc2wen}
%\ben
%\max_{\mathbf c, \tau}&& \tau,\\
%\text{s.t.}&&\min_{\mathbf f_{\eta}\in \mathcal B}\sum_{i=1}^Rf_{\eta i}c_i\|\mathbf u_i\|_2-\sqrt{\gamma^{(\kappa)}}\sigma_R^2\sum_{i=1}^Rf_{\eta i}^2c_i^2-\sqrt{\gamma^{(\kappa)}}\sigma_D^2\geq \tau,\\
%&&c_i\leq \sqrt{\frac{P_i}{\sigma_R^2+\|\mathbf u_i\|_2^2}}.
%\enn
%\end{subequations}
which is obviously equivalent to
%\begin{subequations}\label{equ:robc2}
%\ben
%\min_{\mathbf c, \tau}&& -\tau,\\
%\text{s.t.}&&\sum_{i=1}^Rf_{\eta i}c_i\|\mathbf u_i\|_2-\sqrt{\gamma^{(\kappa)}}\sigma_R^2\sum_{i=1}^Rf_{\eta i}^2c_i^2-\sqrt{\gamma^{(\kappa)}}\sigma_D^2\geq \tau, \mathbf f_{\eta}\in \mathcal B,\\
%&&c_i\leq \sqrt{\frac{P_i}{\sigma_R^2+\|\mathbf u_i\|_2^2}}.
%\enn
%\end{subequations}
a second order cone programming (SOCP) problem \eqref{equ:f4}, and can be solved in polynomial time by interior point method. Then the solution $\mathbf c^{(\kappa)}$ from \eqref{equ:f4} is used to update $\gamma^{(\kappa+1)}$, i.e.,
\addtocounter{equation}{3}
\ben\label{equ:ier}
\gamma^{(\kappa+1)}=\min_{\mathbf f_{\eta}\in \mathcal B} \frac{\left(\sum_{i=1}^Rf_{\eta i}c_i^{(\kappa)}\|\mathbf u_i\|_2\right)^2}{\sigma_R^2\sum_{i=1}^Rf_{\eta i}^2c_i^{(\kappa)2}+\sigma_D^2}.
\enn
When $\tau=0$, this iteration stops.
We summarize the Dinkelbach based algorithm in the Algorithm~\ref{alg:dinkelbach}.
\begin{table}[!h]
\centering \caption{Algorithm I: Dinkelbach-based Algorithm for determining the optimal $\mathbf c$ in \eqref{equ:robb3}}
\begin{tabular}{|p{0.5cm}|p{7cm}|cl}\hline
1 & Choose $\delta_1$ as the desired threshold. Set $\kappa=0$ and  $c_i^{(\kappa)}=\sqrt{\frac{P_i}{\sigma_R^2+\|\mathbf u_i\|_2^2}}$ as the initial power allocation factor.\\\hline
2 &With given $\mathbf c^{(\kappa)}$, set $\gamma^{(\kappa+1)}$ as in \eqref{equ:ier}.
\\\hline
3 &Solve the SOCP problem in  \eqref{equ:f4} to obtain $\mathbf c^{(\kappa+1)}$.\\\hline
4 &If $\min_{\mathbf f_{\eta}\in \mathcal B}\sum_{i=1}^Rf_{\eta i}c_i^{(\kappa)}\|\mathbf u_i\|_2-\sqrt{\gamma^{(\kappa)}}\sigma_R^2\sum_{i=1}^R
f_{\eta i}^2c_i^{(\kappa)2}-\sqrt{\gamma^{(\kappa)}}\sigma_D^2
 \leq \delta_1$, go to Step $5$. Otherwise, $\kappa=\kappa+1$, and go to Step $2$. \\\hline
5&Return $\mathbf c^\sharp=\mathbf c^{(\kappa)}$.\\\hline
\end{tabular}\label{alg:dinkelbach}
\end{table}

%\begin{algorithm}[htb]
%\caption{Algorithm for determining the optimal $\mathbf c$ in \eqref{equ:robb3}}
%\label{alg:dinkelbach}
%\begin{algorithmic}[0] %Õâ¸ö1 ±íʾÿһÐж¼ÏÔʾÊý×Ö
%\REQUIRE ~~\\ %Ëã·¨µÄÊäÈë²ÎÊý£ºInput
%Choose $\delta$ as the desired threshold. Set $\kappa=0$ and  $c_i^{(\kappa)}=\sqrt{\frac{P_i}{\sigma_R^2+\|\mathbf u_i\|_2^2}}$ as the initial power allocation factor.
%\REPEAT
%\STATE 1: With given $\mathbf c^{(\kappa)}$, set $\gamma^{(\kappa+1)}=\min_{\mathbf f_{\eta}\in \mathcal B} \frac{\left(\sum_{i=1}^Rf_{\eta i}c_i^{(\kappa)}\|\mathbf u_i\|_2\right)^2}{\sigma_R^2\sum_{i=1}^Rf_{\eta i}^2c_i^{(\kappa)2}+\sigma_D^2}$.
%\STATE 2: Solve the SOCP problem in  \eqref{equ:robc2} to obtain $\mathbf c^{(\kappa+1)}$.
%\STATE 3: Let $\kappa=\kappa+1$.
%\UNTIL{$\min_{\mathbf f_{\eta}\in \mathcal B}\sum_{i=1}^Rf_{\eta i}c_i^{(\kappa)}\|\mathbf u_i\|_2-\sqrt{\gamma^{(\kappa)}}\sigma_R^2\sum_{i=1}^R
%f_{\eta i}^2c_i^{(\kappa)2}-\sqrt{\gamma^{(\kappa)}}\sigma_D^2
% \leq \delta$.}  %Ëã·¨µÄ·µ»ØÖµ
%\RETURN {$\mathbf c=\mathbf c^{(\kappa)}$}. %Ëã·¨µÄ·µ»ØÖµ
%\end{algorithmic}
%\end{algorithm}

\begin{remark}\label{full_power}
In the perfect CSI case, $c_i$ is obtained by Corollary~\ref{cor:Jing}, which can be any value between $0$ and its maximal value. However, as pointed out in \cite{Power_Jing}, there is at least one relay that uses its full power. The same phonomania holds true in the robust case. This can be explained as follows. Suppose that none of the relays uses its full power. Then, there exists a real-valued $\chi>1$ defined as
\ben
\chi\triangleq \min _{i\in \{1, \cdots, R\}}\Bigg\{\sqrt{\frac{P_i}{c_i^{\sharp 2}(\|\mathbf u_i\|_2^2+\sigma_R^2)}}\Bigg\}.\nonumber
\enn
It is easy to see that $\chi c_i^{\sharp}$ also satisfies the power constraints in \eqref{powercons}. But
\ben
\min_{\mathbf f_{\eta}\in \mathcal B}\frac{\chi^2\big(\sum_{i=1}^Rf_{\eta i}c_i^{\sharp}\|\mathbf u_i\|_2\big)^2}{\chi^2\sigma_R^2\sum_{i=1}^Rf_{\eta i}^2c_i^{\sharp 2}+\sigma_D^2}>\min_{\mathbf f_{\eta}\in \mathcal B}\frac{\big(\sum_{i=1}^Rf_{\eta i}c_i^{\sharp}\|\mathbf u_i\|_2\big)^2}{\sigma_R^2\sum_{i=1}^Rf_{\eta i}^2c_i^{\sharp 2}+\sigma_D^2}.\nonumber
\enn
Then the new coefficient $\chi c_i^{\sharp}$ leads to a higher SNR which contradicts to the assumption that  $c_i^{\sharp}$ is the optimal solution.
\end{remark}

\section{Optimal BF vector at the source}\label{sec:opSBF}
By Theorem~\ref{theorem:robb1}, the optimization variables of problem \eqref{equ:robb1} has been transformed into $\mathbf c$ and $\mathbf g$. According to section~\ref{sec:robb} and section~\ref{subsec:opRBFb}, by fixing a $\mathbf g$, the optimal solution of $\mathbf c$, can be obtained from Algorithm \ref{alg:dinkelbach}, i.e., $\mathbf c^\sharp=\mathbf c^\sharp(\mathbf g)$.
Then the remaining challenge is to determine the optimal $\mathbf g$, which is the solution of
\begin{subequations}\label{equ:boxnewu}
\ben
\max _{\mathbf g}\min_{\mathbf f _{\eta}\in \mathcal B}&&\frac{\big|\sum_{i=1}^Rf_{\eta i}c^{\sharp}_i(\mathbf g)\|\mathbf u_i\|_2\big|^2}{\sigma_R^2\sum_{i=1}^Rf_{\eta i}^2c^{\sharp}_i(\mathbf g)^2+\sigma_D^2},\\
\text{s.t.} &&\|\mathbf g\|_2^2\leq P_s.
\enn
\end{subequations}
%where we denote $c_i^{\sharp}(\mathbf g)$ as optimal solution of \eqref{equ:robb3} when $\mathbf g$ is given.

Due to the non-convex nature of \eqref{equ:boxnewu}, it seems impossible to derive the optimal solution.  Even in the perfect CSI case, the authors in \cite{AF-BF} only propose a suboptimal algorithm based on the Gradient method. However, by exploiting the hidden monotonic property of problem \eqref{equ:boxnewu}, we propose an efficient algorithm based on the Polyblock outer Approximation (PA) algorithm to determine the global optimal $\mathbf g$. We also find that the global optimal $\mathbf g$ is parallel to the principal eigenvector of $\sum_{i=1}^R\mu_i\mathbf H_i^H\mathbf H_i$, for some $\sum_{i=1}^R\mu_i\leq 1, \mu_i\geq 0$ in section \ref{sec:opSBFa}.
Our result covers the special case discussed in \cite{BK_grass} that $\mathbf g=\sqrt{P_s}{\bm \upsilon}(\mathbf H_1^H\mathbf H_1)$ when $R=1$.
\subsection{Monotonic Optimization}\label{sec:opSBFa}
Let $\mathbb R_{+}^{N}$ be the $N$-dimensional non-negative real set. A set $\mathcal H\subset \mathbb R_{+}^{N}$ is called normal if for any point $\mathbf x\in \mathcal H$, any point $\mathbf x'$ with $\mathbf 0\leq \mathbf x'\leq \mathbf x$ must satisfy $\mathbf x'\in \mathcal H$. An optimization problem is the \emph{monotonic optimization} problem if it can be expressed as
\ben\nonumber
\max _{\mathbf x}\Phi(\mathbf x),\quad \quad \text{s.t.} ~\mathbf x\in \mathcal H,
\enn
where $\mathcal H\subset \mathbb R_{+}^{N}$ is a nonempty normal closed set and the function $\Phi(\mathbf x)$ is an increasing function with respect to $\mathbf x\in \mathcal H$.
%\begin{definition}
%$\mathbf x$ is called the Pareto boundary (or Pareto optimal) of a region $\mathcal H$ if there is no other vector $\mathbf x'\in \mathcal H$ such that $\mathbf x'> \mathbf x$.
%\end{definition}

To exploit the monotonic property of problem \eqref{equ:boxnewu}, we define
\ben
\mathbf w\triangleq[w_1, \cdots, w_R]^T\triangleq[\|\mathbf H_1\mathbf g\|_2^2, \cdots, \|\mathbf H_R\mathbf g\|_2^2]^T,\nonumber
\enn
Then the worst case SNR becomes a function of the new variable $\mathbf w$, i.e.,
\ben\label{equ:worstbu}
{\bf SNR}(\mathbf w) &\triangleq&\min_{\mathbf f_{\eta}\in \mathcal B}\frac{\big|\sum_{i=1}^Rf_{\eta i}c^{\sharp}_i(\mathbf w)\sqrt{w_i}\big|^2}{\sigma_R^2\sum_{i=1}^Rf_{\eta i}^2c^{\sharp}_i(\mathbf w)^2+\sigma_D^2},
\enn
where  $c_i^{\sharp}(\mathbf w)$ is the optimal solution of the following problem for given $\mathbf w$,
\begin{subequations}\label{equ:robbw}
\ben
\max_{\mathbf c}\min_{\mathbf f_{\eta}\in \mathcal B}&& \frac{\left(\sum_{i=1}^Rf_{\eta i}c_i\sqrt{w_i}\right)^2}{\sigma_R^2\sum_{i=1}^Rf_{\eta i}^2c_i^2+\sigma_D^2},\\
\text{s.t.}&&c_i\leq \sqrt{\frac{P_i}{\sigma_R^2+w_i}}.
\enn
\end{subequations}
Denote
\ben\label{equ:regionu2}
\mathcal U&\triangleq &\{\mathbf w|\mathbf w=[\text{tr}(\mathbf H^T_1\mathbf H_1\mathbf G),\cdots, \nonumber\\
&&\text{tr}(\mathbf H^T_R\mathbf H_R\mathbf G)]^T,\, \mathbf G\succeq \mathbf 0,\, \text{tr}(\mathbf G)\leq P_s\}.\nonumber
\enn
Then we have the following proposition.
\begin{proposition}\label{pro:monotonic}
Problem \eqref{equ:boxnewu} is equivalent to the following \emph{monotonic optimization} problem
\ben\label{equ:pa2}
\max_{\mathbf w}{\bf SNR}(\mathbf w), \quad \quad \text{s.t.}~ \mathbf w\in \mathcal U,
\enn
where the optimal $\mathbf w$ of \eqref{equ:pa2} must be on the Pareto boundary\footnote{$\mathbf x$ is called the Pareto boundary (or Pareto optimal) of a region $\mathcal H$ if there is no other vector $\mathbf x'\in \mathcal H$ such that $\mathbf x'> \mathbf x$.} of $\mathcal U$, and the associated $\mathbf G$ must be of rank one.
\end{proposition}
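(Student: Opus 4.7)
The plan is to prove Proposition \ref{pro:monotonic} in three stages: first, show that ${\bf SNR}(\mathbf{w})$ is coordinatewise nondecreasing on $\mathcal{U}$, so that \eqref{equ:pa2} is genuinely monotonic and its value upper-bounds that of \eqref{equ:boxnewu}; second, deduce that any maximizer $\mathbf{w}^{\sharp}$ of \eqref{equ:pa2} must lie on the Pareto boundary of $\mathcal{U}$; third, use convexity of $\mathcal{U}$ together with a scalarization argument to show that the associated $\mathbf{G}$ may be taken rank one, which simultaneously closes the gap between \eqref{equ:boxnewu} and \eqref{equ:pa2}.

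For the monotonicity step, I would reparametrize the inner problem \eqref{equ:robbw} by setting $c_i=\alpha_i\sqrt{P_i/(\sigma_R^2+w_i)}$ with $\alpha_i\in[0,1]$. Under this change of variable the feasible box for $\bm{\alpha}$ is independent of $\mathbf{w}$, while the SNR becomes
\[
\frac{\bigl(\sum_{i=1}^R f_{\eta i}\alpha_i\sqrt{P_i w_i/(\sigma_R^2+w_i)}\bigr)^2}{\sigma_R^2\sum_{i=1}^R f_{\eta i}^2\alpha_i^2 P_i/(\sigma_R^2+w_i)+\sigma_D^2}.
\]
The numerator depends on $w_i$ only through $P_i w_i/(\sigma_R^2+w_i)$, which is nondecreasing in $w_i$, while the denominator depends on it only through $P_i/(\sigma_R^2+w_i)$, which is nonincreasing in $w_i$. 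Since the entries of any $\mathbf{f}_\eta\in\mathcal{B}$ are nonnegative under the standing assumption $\varepsilon_i\le\|\tilde{\mathbf{f}}_i\|_2$, for every fixed $\bm{\alpha}$ and $\mathbf{f}_\eta$ the expression is coordinatewise nondecreasing in $\mathbf{w}$. Because $\min_{\mathbf{f}_\eta\in\mathcal{B}}$ and $\max_{\bm{\alpha}\in[0,1]^R}$ both preserve coordinatewise monotonicity, ${\bf SNR}(\mathbf{w})$ is nondecreasing in each coordinate. Since every feasible $\mathbf{g}$ induces $\mathbf{w}(\mathbf{g})=(\|\mathbf{H}_i\mathbf{g}\|_2^2)_i$ corresponding to the rank-one lift $\mathbf{G}=\mathbf{g}\mathbf{g}^H$ lying in the feasible set defining $\mathcal{U}$, the value of \eqref{equ:boxnewu} is at most that of \eqref{equ:pa2}.

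Next, if $\mathbf{w}^{\sharp}$ were not Pareto optimal in $\mathcal{U}$ there would exist $\mathbf{w}'\in\mathcal{U}$ with $\mathbf{w}'>\mathbf{w}^{\sharp}$, and monotonicity would give ${\bf SNR}(\mathbf{w}')\ge {\bf SNR}(\mathbf{w}^{\sharp})$, so one may take $\mathbf{w}^{\sharp}$ on the Pareto boundary. For the rank-one claim, observe that $\mathcal{U}$ is the image of the convex compact set $\{\mathbf{G}\succeq 0:\mathrm{tr}(\mathbf{G})\le P_s\}$ under the linear map $\mathbf{G}\mapsto(\mathrm{tr}(\mathbf{H}_i^H\mathbf{H}_i\mathbf{G}))_i$, and so is itself convex and compact. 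A supporting-hyperplane argument at the Pareto point $\mathbf{w}^{\sharp}$, with monotonicity forcing the normal vector to lie in $\mathbb{R}_+^R$, supplies weights $\mu_i\ge 0$ (which, since the principal eigenvector is scale invariant, can be normalized so $\sum_i\mu_i\le 1$) such that $\mathbf{w}^{\sharp}$ is attained by
\[
\mathbf{G}^{\sharp}\in\arg\max_{\mathbf{G}\succeq 0,\ \mathrm{tr}(\mathbf{G})\le P_s}\mathrm{tr}\!\Bigl(\Bigl(\sum_{i=1}^R\mu_i\mathbf{H}_i^H\mathbf{H}_i\Bigr)\mathbf{G}\Bigr).
\]
The maximum of a linear functional $\mathrm{tr}(\mathbf{M}\mathbf{G})$ with $\mathbf{M}\succeq\mathbf{0}$ over the PSD trace ball is attained at the rank-one matrix $\mathbf{G}^{\sharp}=P_s\,\bm{\upsilon}\bm{\upsilon}^H$ with $\bm{\upsilon}={\bm\upsilon}(\mathbf{M})$. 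Since ${\bf SNR}$ depends on $\mathbf{G}$ only through $\mathbf{w}(\mathbf{G})$ and this $\mathbf{G}^{\sharp}$ reproduces $\mathbf{w}^{\sharp}$, the vector $\mathbf{g}^{\sharp}=\sqrt{P_s}\,\bm{\upsilon}$ attains the optimum of \eqref{equ:boxnewu}, proving equivalence and the rank-one property.

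I expect the main obstacle to be justifying that the supporting hyperplane at $\mathbf{w}^{\sharp}$ admits a nonnegative normal $\bm{\mu}$: this does not follow from convexity of $\mathcal{U}$ alone, but must be extracted from the combination of Pareto optimality with the monotonic objective, equivalently by replacing $\mathcal{U}$ with its downward closure (which is normal and leaves the optimum of \eqref{equ:pa2} unchanged by monotonicity of ${\bf SNR}$). The remaining pieces are routine: checking the reparametrization underlying the monotonicity claim and invoking the standard rank-one attainment for a linear functional on a trace-constrained PSD cone.
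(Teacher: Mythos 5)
Your monotonicity argument is correct and is in substance the paper's own proof in cleaner clothing: writing $c_i=\alpha_i\sqrt{P_i/(\sigma_R^2+w_i)}$ and holding $\bm{\alpha}$ fixed is exactly the paper's device of comparing $\mathbf w'\geq\mathbf w''$ through a power allocation $\tilde{\mathbf c}'$ chosen so that the relay transmit powers $c_i^2(w_i+\sigma_R^2)$ are unchanged; both arguments then use $f_{\eta i}\geq 0$ and the fact that $\min_{\mathbf f_\eta\in\mathcal B}$ and the outer maximization preserve the pointwise inequality. The Pareto-boundary step is likewise identical. Where you genuinely diverge is the rank-one claim: the paper does not prove it at all, but imports convexity, closedness and normality of $\mathcal U$ and the statement ``every Pareto point of $\mathcal U$ is achieved by a rank-one $\mathbf G$'' from \cite{RM_Pareto}, whereas you attempt a self-contained scalarization proof.

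That attempt has a gap beyond the nonnegative-normal issue you flagged (which is indeed repairable via the downward closure of $\mathcal U$). Your final step asserts that the rank-one maximizer $\mathbf G^{\sharp}=P_s\bm\upsilon\bm\upsilon^H$ of $\mathrm{tr}(\mathbf M\mathbf G)$, $\mathbf M=\sum_i\mu_i\mathbf H_i^H\mathbf H_i$, ``reproduces $\mathbf w^{\sharp}$.'' This does not follow: the scalarized problem only pins down the value $\bm\mu^T\mathbf w^{\sharp}=P_s\lambda_{\max}(\mathbf M)$, and its solution set is $\{P_s\mathbf Q:\mathbf Q\succeq\mathbf 0,\ \mathrm{tr}(\mathbf Q)=1,\ \mathrm{range}(\mathbf Q)\subseteq E\}$ with $E$ the principal eigenspace of $\mathbf M$. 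The maximizer that actually attains $\mathbf w^{\sharp}$ and the rank-one maximizer coincide only when $\lambda_{\max}(\mathbf M)$ is simple; if $\dim E>1$, the image of the optimal face is the convex hull of the images of the rank-one elements, and $\mathbf w^{\sharp}$ may be a strict convex combination that no rank-one $\mathbf G$ reaches. Since ${\bf SNR}(\cdot)$ is only monotone, not linear, you cannot pass from ``same value of $\bm\mu^T\mathbf w$'' to ``${\bf SNR}(\mathbf w(\mathbf G^{\sharp}))\geq{\bf SNR}(\mathbf w^{\sharp})$,'' so the equivalence with \eqref{equ:boxnewu} is not closed in the degenerate case. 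To finish you must either assume/argue simplicity of the principal eigenvalue (generic channels), handle ties by an additional argument (e.g., perturbation of $\bm\mu$, or showing the optimum can be moved to an extreme point of the optimal face, whose preimages are rank one), or do as the paper does and invoke the Pareto-boundary characterization of \cite{RM_Pareto}.
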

\begin{proof}
See Appendix \ref{app:monotonic}.
\end{proof}

Suppose that $\mathbf G^\sharp$ is associated with the optimal $\mathbf w$ of \eqref{equ:pa2}. According to Proposition~\ref{pro:monotonic}, $\mathbf G^\sharp$ must be of rank one. By eigenvalue decomposition $\mathbf G^\sharp=\mathbf g^\sharp\mathbf g^{\sharp H}$, we can obtain $\mathbf g^{\sharp}$.
Specifically, the structure of the global optimum $\mathbf g^\sharp$ can be derived in Corollary \ref{cor:pa} by following a similar argument as that in \cite{RM_Pareto}.

\begin{corollary}\label{cor:pa}
The global optimal $\mathbf g$ has the structure
\ben
\mathbf g=\sqrt{P_s}{\bm\upsilon}\left(\sum_{i=1}^R\mu_i\mathbf H_i^H\mathbf H_i\right), {\bm \mu}\triangleq[\mu_1, \cdots, \mu_R]\in \mathcal V.\label{equ:closeg}\nonumber
\enn
where
\ben
\mathcal V\triangleq\left\{{\bm \mu}\Big|\sum_{i=1}^R\mu_i=1, \mu_i\geq 0\right\}.\nonumber
\enn
\end{corollary}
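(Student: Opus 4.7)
The statement is an essentially structural consequence of Proposition~\ref{pro:monotonic} combined with a supporting-hyperplane argument on the Pareto boundary of the set $\mathcal U$. I would first observe that $\mathcal U$ is a convex compact subset of $\mathbb R_{+}^R$, since it is the image of the convex compact set $\mathcal G\triangleq\{\mathbf G\succeq\mathbf 0:\text{tr}(\mathbf G)\leq P_s\}$ under the linear map $\mathbf G\mapsto(\text{tr}(\mathbf H_i^H\mathbf H_i\mathbf G))_{i=1}^R$. By Proposition~\ref{pro:monotonic}, any optimal $\mathbf g^\sharp$ of \eqref{equ:boxnewu} induces a point $\mathbf w^\sharp=(\|\mathbf H_i\mathbf g^\sharp\|_2^2)_i$ lying on the Pareto boundary of $\mathcal U$, and the associated matrix $\mathbf G^\sharp=\mathbf g^\sharp(\mathbf g^\sharp)^H$ is rank-one.

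Next, the supporting hyperplane theorem applied to $\mathcal U$ at $\mathbf w^\sharp$ furnishes a nonzero weight vector $\bm\mu\geq\mathbf 0$, which after normalization satisfies $\sum_i\mu_i=1$ (so $\bm\mu\in\mathcal V$), such that $\mathbf w^\sharp\in\arg\max_{\mathbf w\in\mathcal U}\bm\mu^T\mathbf w$. Rewriting this in terms of $\mathbf G$ yields
\begin{equation}
\mathbf G^\sharp\in\arg\max_{\mathbf G\in\mathcal G}\text{tr}(\mathbf A\mathbf G),\qquad \mathbf A\triangleq\sum_{i=1}^R\mu_i\mathbf H_i^H\mathbf H_i.\nonumber
\end{equation}
For any $\mathbf A\succeq\mathbf 0$ with $\lambda_{\max}(\mathbf A)>0$, this maximum equals $P_s\lambda_{\max}(\mathbf A)$ and is attained precisely by the rank-one matrices $P_s\mathbf v\mathbf v^H$ where $\mathbf v$ is a unit eigenvector of $\mathbf A$ associated with $\lambda_{\max}(\mathbf A)$. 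Matching against $\mathbf G^\sharp=\mathbf g^\sharp(\mathbf g^\sharp)^H$ therefore forces $\mathbf g^\sharp=\sqrt{P_s}\,{\bm\upsilon}(\mathbf A)$ up to a global phase that does not affect the received SNR, which is precisely the claimed form.

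The step of verifying that the trace constraint is active, $\text{tr}(\mathbf G^\sharp)=P_s$, follows because the worst-case SNR in~\eqref{equ:worstbu} is non-decreasing in $\mathbf w$ along any positive ray (larger $w_i$ enlarges the feasible set of $\mathbf c$ in \eqref{equ:robbw} and increases every quantity $f_{\eta i}c_i\sqrt{w_i}$), so any Pareto-optimal $\mathbf w^\sharp$ must saturate the source power budget. The only delicate point I expect is the degenerate case in which $\lambda_{\max}(\mathbf A)$ is not simple, since ${\bm\upsilon}(\mathbf A)$ is then not uniquely defined; the cleanest way to handle this is either to invoke the convention fixing ${\bm\upsilon}(\cdot)$ as any admissible unit principal eigenvector, or to perturb $\bm\mu$ infinitesimally within $\mathcal V$ to restore simplicity while preserving optimality. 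Either way, this is routine and constitutes the only obstacle of the argument.
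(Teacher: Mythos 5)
Your proposal is correct and follows essentially the route the paper itself intends for this corollary (the paper omits the proof and points to the scalarization argument of \cite{RM_Pareto}): use Proposition~\ref{pro:monotonic} to place the optimum on the Pareto boundary of the convex set $\mathcal U$ with rank-one $\mathbf G^\sharp$, support $\mathcal U$ there with a normal $\bm\mu\geq \mathbf 0$, $\sum_i\mu_i=1$, and identify the rank-one maximizer of $\text{tr}\big(\sum_{i=1}^R\mu_i\mathbf H_i^H\mathbf H_i\,\mathbf G\big)$ over $\{\mathbf G\succeq\mathbf 0,\ \text{tr}(\mathbf G)\leq P_s\}$ as $P_s{\bm\upsilon}(\cdot){\bm\upsilon}(\cdot)^H$. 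One minor slip: increasing $w_i$ actually shrinks (not enlarges) the feasible set of $\mathbf c$ in \eqref{equ:robbw}, but the trace activation $\text{tr}(\mathbf G^\sharp)=P_s$ that you need does not rely on that remark—it follows from the scaling argument of Lemma~\ref{lemma:pareto} or directly from your linear maximization once $\lambda_{\max}\big(\sum_i\mu_i\mathbf H_i^H\mathbf H_i\big)>0$.
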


\begin{remark}
According to Corollary \ref{cor:pa}, by implementing the grid search in $\mathcal V$, one can asymptotically achieve the optimal SNR if the grid is sufficiently fine. Set the search step as $0.01$, one has to compare $100$ points for $R=2$. When $R=3, 4, 5$, this number rises to $5000, 250000, 12500000$, respectively. It can be seen that the complexity of this grid search increases with $R$ rapidly. Hence in the next subsection, we will propose an efficient PA-based algorithm for solving the optimal $\mathbf w$ by taking advantage of the monotonic property of \eqref{equ:pa2}.
\end{remark}

\begin{remark}
It is worth pointing out that for some special cases, the optimal source BF vector $\mathbf g$ has following expressions
\begin{itemize}
  \item case $1$: $N_T=1$, then $g=\sqrt{P_s}$.
  \item case $2$: $R=1$,  then $\mathbf g=\sqrt{P_s}{\bm\upsilon}(\mathbf H_1^H\mathbf H_1).$
  \item case $3$: $M_1=M_2=1$, then
$\mathbf g=\sqrt{P_S}\sin\theta\frac{\Pi_{\mathbf h_2}\mathbf h_1}{\|\Pi_{\mathbf h_2}\mathbf h_1\|_2}+\sqrt{P_S}\cos\theta\frac{\Pi_{\mathbf h_2}^{\bot}\mathbf h_1}{\|\Pi_{\mathbf h_2}^{\bot}\mathbf h_1\|_2}$, where $
\theta\in [0, \frac{\pi}{2}]$, and can be obtained by one dimensional search, $\Pi_{\mathbf x}\triangleq \mathbf x(\mathbf x^H\mathbf x)^{-1}\mathbf x^H$ is the orthogonal projection onto the column space of $\mathbf x$, and $\Pi^{\bot}_{\mathbf x}\triangleq \mathbf I-\Pi_{\mathbf x}$ is the orthogonal projection onto the orthogonal complement of the column space of $\mathbf x$.
\end{itemize}
\end{remark}

\subsection{Polyblock outer Approximation (PA) Algorithm}\label{sec:robc}
In the literature, two general algorithms are widely used for solving monotonic problems: the PA algorithm from \cite{LL_ploy} and the Branch-Reduce-and-Bound (BRB) algorithm from \cite{EB_BRB}\cite{EB_BRB2}. In this subsection, we will briefly introduce the PA algorithm, and then propose a PA-based algorithm for  solve the optimal $\mathbf w$ in \eqref{equ:pa2}, which automatically results in the solution of global optimal source BF vector $\mathbf g$. Performance comparison between the PA and BRB algorithm will be given in our simulation part.
More details on PA algorithm can be found in~\cite{LL_ploy, EB_BRB2}.

A set $\mathcal P$ is called a polyblock if it is the union of a finite number of boxes\footnote{
For given $\mathbf b\in \mathbb R_{+}^{N_T}$, the set of all $\mathbf x$ such that $\mathbf 0\leq \mathbf x \leq \mathbf b$ is called a box with vertex $\mathbf b$.}. The main idea of PA is to approximate $\mathcal U$  by constructing a sequence of polyblocks $\mathcal P^{(\kappa)}$ with increasing accuracy.
At each iteration, a refined outer approximation $\mathcal P^{(\kappa)}$, of $\mathcal U$ is generated, such that $\mathcal P^{(1)}\supset \mathcal P^{(2)} \supset \cdots \supset\mathcal U$. Let $\mathcal Z^{(\kappa)}$ denote the set containing all the vertices of the polyblock $\mathcal P^{(\kappa)}$.
Since the optimal $\mathbf w$ must be on the Pareto boundary of $\mathcal U$, we will try to find that point in a shrinking search region.
The vertex that achieves the maximum SNR in $\mathcal Z^{(\kappa)}$ is defined by $\tilde { \mathbf z}^{(\kappa)}$, i.e.,
$\tilde { \mathbf z}^{(\kappa)}=\arg \max _{\mathbf z\in \mathcal Z^{(\kappa)}} {\bf SNR}(\mathbf z)$, which is chosen for determining the next Pareto boundary point on $\mathcal U$. Define $\lambda \tilde {\mathbf z}^{(\kappa)}$ as the line that connects the points $\mathbf 0$ and $\tilde {\mathbf z}^{(\kappa)}\triangleq[\tilde z_1^{(\kappa)}, \cdots, \tilde z_{N_T}^{(\kappa)}]^T$. Then the next feasible point $\mathbf w^{(\kappa)}\triangleq[w_1^{(\kappa)}, \cdots, w_{N_T}^{(\kappa)}]^T$ is computed as the intersection point on the Pareto boundary of ${\mathcal U}$ with the line $\lambda \tilde {\mathbf z}^{(\kappa)}$. The following method is used to generate $N_T$ new vertices adjacent to $\tilde {\mathbf z}^{(\kappa)}$.
\ben\label{equ:39}
\mathbf z^{(\kappa),i}=\tilde {\mathbf z}^{(\kappa)}-(\tilde z_i^{(\kappa)}-w_i^{(\kappa)})\mathbf e_i, \,i=1, \cdots, N_T,
\enn
where $\mathbf z^{(\kappa),i}$ denotes the $i$th new vertex generated at the $\kappa$th iteration. Then the new vertex set can be expressed as
\ben\label{equ:40}
\mathcal Z^{(\kappa+1)}=\Big(\mathcal Z^{(\kappa)}\backslash \tilde {\mathbf z}^{(\kappa)}\Big)\cup \{\mathbf z^{(\kappa),1}, \cdots, \mathbf z^{(\kappa),N_T}\}.
\enn
Each vertex $\mathbf z\in \mathcal Z^{(\kappa+1)}$ defines a box, and thus the new polyblock $\mathcal P^{(\kappa+1)}$ is the union of all these boxes.  The upper and lower bound are refined as follows. The current upper bound is $f_{\max}^{(\kappa+1)}=\max_{\mathbf z\in \mathcal Z^{(\kappa+1)}}{\bf SNR}(\mathbf z)$ and the current lower bound is the maximum SNR among all the feasible points found so far: $f_{\min}^{(\kappa+1)}=\max_{\kappa}{\bf SNR}(\mathbf w^{(\kappa)})$.
The algorithm terminates when the gap between $f_{\min}^{(\kappa+1)}$ and $f_{\max}^{(\kappa+1)}$ meets some threshold. The optimal $\mathbf w$ is the feasible point $\mathbf w^{(\kappa)}$ that achieves $f_{\min}^{(\kappa+1)}$.

Now, the only remaining problem is how to determine the intersection point $\mathbf w^{(\kappa)}$, which will be addressed next.

\subsection{Finding Intersection Points by the Rate Profile Approach}
In this subsection, we show how to determine the intersection point $\mathbf w^{(\kappa)}$ on the Pareto boundary of $\mathcal U$ with the line $\lambda \tilde{\mathbf z}^{(\kappa)}$, to apply PA Algorithm. To proceed, we first introduce the following lemma, which is important for obtaining $\mathbf w^{(\kappa)}$.
\begin{lemma}\label{lemma:pareto}
For any $\mathbf w$ on the Pareto boundary of $\mathcal U$, the corresponding $\mathbf G$ satisfies $\text{tr}(\mathbf G)=P_s$.
\end{lemma}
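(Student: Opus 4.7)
The plan is to argue by contradiction. Suppose $\mathbf{w}$ lies on the Pareto boundary of $\mathcal{U}$ but the associated covariance $\mathbf{G}\succeq\mathbf{0}$ leaves slack in the power budget, i.e., $\delta\triangleq P_s-\text{tr}(\mathbf{G})>0$. The idea is to spend the leftover $\delta$ in a direction that strictly increases \emph{every} coordinate of the resulting $\mathbf{w}$, producing a feasible point that dominates $\mathbf{w}$ and thereby contradicting Pareto optimality.

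Concretely, I would define
\[
\mathbf{G}'\triangleq\mathbf{G}+\tfrac{\delta}{N_T}\,\mathbf{I}_{N_T}.
\]
Since $\mathbf{G}\succeq\mathbf{0}$ and $\tfrac{\delta}{N_T}\mathbf{I}_{N_T}\succ\mathbf{0}$, one has $\mathbf{G}'\succeq\mathbf{0}$, and by construction $\text{tr}(\mathbf{G}')=P_s$, so $\mathbf{G}'$ is feasible. The associated coordinates satisfy
\[
w_i'=\text{tr}(\mathbf{H}_i^{H}\mathbf{H}_i\mathbf{G}')=w_i+\tfrac{\delta}{N_T}\|\mathbf{H}_i\|_F^{\,2},\qquad i=1,\ldots,R.
\]
Under the natural non-degeneracy assumption that each first-hop channel is nontrivial, i.e., $\mathbf{H}_i\neq\mathbf{0}$ and therefore $\|\mathbf{H}_i\|_F^{\,2}>0$, this gives $w_i'>w_i$ for every $i$, hence $\mathbf{w}'>\mathbf{w}$ coordinate-wise strictly. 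This contradicts the assumed Pareto optimality of $\mathbf{w}$, and the lemma follows.

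The argument is short and I do not foresee any technical obstacle; the only point requiring mention is the non-degeneracy of the $\mathbf{H}_i$'s. If some $\mathbf{H}_i$ were the zero matrix, relay $i$ would be disconnected from the source and could simply be dropped from the model, so this mild hypothesis is free. An alternative route is the pure scaling $\mathbf{G}'=\alpha\mathbf{G}$ with $\alpha=P_s/\text{tr}(\mathbf{G})>1$, but that gives only $w_i'>w_i$ at indices where $w_i>0$; the additive isotropic perturbation above handles the degenerate case $w_i=0$ uniformly in one shot.
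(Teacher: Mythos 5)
Your proof is correct, and it is a close cousin of the paper's own argument rather than an identical copy: the paper argues by contradiction via the pure scaling $\mathbf G'=\beta\mathbf G^{\sharp}$ with $\beta>1$ (the alternative you mention at the end), concluding directly that $\mathbf w'>\mathbf w^{\sharp}$, whereas you spend the slack through the additive isotropic perturbation $\mathbf G'=\mathbf G+\tfrac{\delta}{N_T}\mathbf I_{N_T}$. The trade-off is exactly the one you identify: the scaling route needs no hypothesis on the channels but only yields a strict increase in coordinates where $w_i>0$ (so, under the componentwise-strict dominance in the paper's footnote, it silently assumes $w_i>0$ for all $i$), while your additive perturbation strictly increases every coordinate, including any with $w_i=0$, at the mild price of assuming $\mathbf H_i\neq\mathbf 0$ --- an assumption that is anyway needed for the lemma to be meaningful, since a relay with a zero first-hop channel collapses $\mathcal U$ into a coordinate hyperplane. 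So your version is marginally more careful on the degenerate corner case; otherwise the two proofs buy the same thing by the same contradiction mechanism.
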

\begin{proof}
Suppose that
\ben
\mathbf w^{\sharp}\triangleq [\text{tr}(\mathbf H_1^H\mathbf H_1\mathbf G^\sharp), \cdots, \text{tr}(\mathbf H_R^H\mathbf H_R\mathbf G^\sharp)]^T\nonumber
\enn
is on the Pareto boundary of $\mathcal U$. If $\text{tr}(\mathbf G^{\sharp })<P_s$, we can scale $\mathbf G^{\sharp}$ to $\mathbf G'$ such that $\mathbf G'=\beta\mathbf G^{\sharp}$ for some $\beta>1$, and $\text{tr}(\mathbf G^{\sharp })< \text{tr}(\mathbf G')\leq P_s$.  Then $\mathbf w'\triangleq [\text{tr}(\mathbf H_1^H\mathbf H_1\mathbf G'), \cdots, \text{tr}(\mathbf H_R^H\mathbf H_R\mathbf G')]^T > \mathbf w^{\sharp}$, which contradicts to the assumption that $\mathbf w^{\sharp}$ is on the Pareto boundary of $\mathcal U$. Therefore we have $\text{tr}(\mathbf G^{\sharp })=P_s$.
\end{proof}
\begin{figure}[!t]
\centering
\includegraphics[width=2.5in]{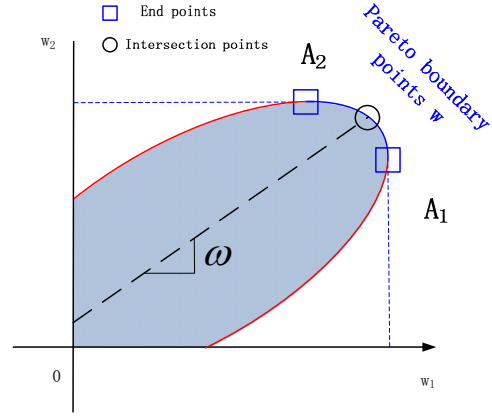}
\caption{An example of $\mathcal U$ when $R=2$.
    The Pareto boundary is only a part of the  boundary of $\mathcal U$. Two end points $A_i$ is determined by $\mathbf G_{\mathbf A_i}\triangleq\arg\max_{\text{tr}(\mathbf G)= P_s}\text{tr}(\mathbf H_i^H\mathbf H_i\mathbf G)=P_s{\bm\upsilon}(\mathbf H_i^H\mathbf H_i)[{\bm\upsilon}(\mathbf H_i^H\mathbf H_i)]^H$, for $i=1,2$. Then point $A_i=(P_s\|\mathbf H_1{\bm\upsilon}(\mathbf H_i^H\mathbf H_i)\|_2^2,P_s\|\mathbf H_2{\bm\upsilon}(\mathbf H_i^H\mathbf H_i)\|_2^2)$. The dashed line uniquely determines the ratio ${\bm \omega}$ between each element of the intersection point $\mathbf w$ and its $1-$norm $\|\mathbf w\|_1$}\label{pareto}
\end{figure}

Lemma \ref{lemma:pareto} states that any Pareto boundary point $\mathbf w\in \mathcal U$ must have its corresponding $\mathbf G$ satisfying $\text{tr}(\mathbf G)=P_s$.  As can be seen from Fig. \ref{pareto}, any point $\mathbf w\in  {\mathcal U}$ corresponds to a profile vector ${\bm \omega}\triangleq[\omega_1, \cdots, \omega_R]=\mathbf w/\|\mathbf w\|_1$, or equivalently, the slope of the line $\lambda \tilde{\mathbf z}^{(\kappa)}$. Consequently, the intersection point $\mathbf w^{(\kappa)}$ can be expressed as ${\bm \omega}Q^{\sharp}$, where $Q^{\sharp}$ is the optimal value of the following problem:
\begin{subequations}\label{equ:normprofile}
\ben
\max_{\mathbf G, Q}&& Q\\
\text{s.t.}&& \text{tr}(\mathbf H_i^H\mathbf H_i\mathbf G)= \omega_iQ, i=1, \cdots, R,\\
&&\text{tr}(\mathbf G)= P_s,\\
&&\mathbf G\succeq \mathbf 0.
\enn
\end{subequations}
The above approach to find $\mathbf w^{(\kappa)}$ is known as \emph{rate profile}\cite{LL_ploy}.

\eqref{equ:normprofile} is an SDP problem and can be efficiently solved using the MATLAB tool package such as CVX \cite{CVX}. Denote the optimal solution as $\mathbf G^{(\kappa)}$.  According to Proposition~\ref{pro:monotonic}, $\mathbf G^{(\kappa)}$ must be of rank one. Then the intersection point $\mathbf w^{(\kappa)}=[\text{tr}(\mathbf H_1^H\mathbf H_1\mathbf G^{(\kappa)}), \cdots, \text{tr}(\mathbf H_R^H\mathbf H_R\mathbf G^{(\kappa)})]^T$, and the corresponding $\mathbf g^{(\kappa)}$ is obtained by eigenvalue decomposition of $\mathbf G^{(\kappa)}$ as $\mathbf G^{(\kappa)}=\mathbf g^{(\kappa)}\mathbf g^{(\kappa)H}$.

\begin{remark}
It should be mentioned that we adopted a different approach compared to  \cite{LL_ploy} for determining the feasible point $\mathbf w^{(\kappa)}$. In \cite{LL_ploy}, the solution involves iterations between a bisection algorithm and an SDP problem. Our work, however,  presents direct approach for obtaining $\mathbf w^{(\kappa)}$, hence bypassing any bisection approach.
\end{remark}

\subsection{The Overall Algorithm for Determining Global Optimal $\mathbf g$}
The PA-based algorithm for solving \eqref{equ:boxnewu} is summarized as Algorithm \ref{alg:PA}.
\begin{table}[!h]
\centering \caption{Algorithm II: PA-based Algorithm for determining the optimal $\mathbf g$}
\begin{tabular}{|p{0.5cm}|p{7cm}|cl}\hline
1 &Set $\kappa=0$ and $\delta_2$ as the given threshhold. Initialize $\mathcal Z^{(\kappa)}=\{\mathbf b_0\}$, $\tilde {\mathbf z}^{(\kappa)}=\mathbf b_0$, $\mathcal V_{z}^{(\kappa)}=\{y|y={\bf SNR}(\mathbf z), \,\mathbf z\in \mathcal Z^{(\kappa)}\}$,
$f_{\min}^{(\kappa)}=0$, and $f_{\max}^{(\kappa)}=\max_{y\in\mathcal
V_z^{(\kappa)}}y$, where $\mathbf b_0\triangleq[P_s\lambda_{\max}(\mathbf H_1^H\mathbf H_1),\cdots, P_s\lambda_{\max}(\mathbf H_R^H\mathbf H_R)]^T$.    The initial $\mathbf g^{(\kappa)}$ is the nonrobust beamforming vector in \cite{AF-BF}.\\\hline
2 &Compute the intersection point $\mathbf w^{(\kappa)}$ on the Pareto boundary of $\mathcal U$ with the line $\lambda\tilde {\mathbf z}^{(\kappa)}$ and obtain the corresponding $\mathbf g^{(\kappa)}$.\\\hline
3&Compute $N_T$ new vertices that are adjacent to $\mathbf w^{(\kappa)}$ by \eqref{equ:39} and update $\mathcal Z^{(\kappa+1)}$ by \eqref{equ:40}. Let $\mathcal V^{(\kappa+1)}=\left\{\mathcal V^{(\kappa)}\backslash {\bf SNR}(\tilde {\mathbf z}^{(\kappa)})\right\}\cup\{{\bf SNR}(\mathbf z^{(\kappa),i})\},i=1, \cdots, N_T.$\\\hline
4&Update the lower bound and upper bound $f_{\min}^{(\kappa+1)}=\max_{\kappa}{\bf SNR}(\mathbf w^{(\kappa)})$, $f_{\max}^{(\kappa+1)}=\max_{y\in \mathcal V_z^{(\kappa+1)}}y$. Let $\kappa_0\triangleq \arg\max_{\kappa}{\bf SNR}(\mathbf w^{(\kappa)})$ and $\tilde {\mathbf z}^{(\kappa+1)}$ be the associate $\mathbf z\in \mathcal Z^{(\kappa+1)}$ that achieves $f_{\max}^{(\kappa+1)}$.\\\hline
5&If $ f_{\max}^{(\kappa+1)}-f_{\min}^{(\kappa+1)}
 \leq \delta_2$, go to Step $6$. Otherwise, let $\kappa=\kappa+1$, and go to Step $2$.\\\hline
6&Return $\mathbf g^{\sharp}=\mathbf g^{(\kappa_0)}$.\\\hline
\end{tabular}\label{alg:PA}
\end{table}

\subsection{Low-complexity Suboptimal Methods for Determining $\mathbf g$}\label{sec:robd}
 The optimal solution obtained from Algorithm \ref{alg:PA} is of  high complexity. In practice, it can be observed that computing the global optimal solution is practically feasible for a small number of relays.  Thus we treat Algorithm \ref{alg:PA} mainly as
a benchmark for performance evaluation. For practical implementation, in this subsection,
we propose two low-complexity suboptimal methods for determining the source BF vector $\mathbf g$, which provides a tradeoff between the computational complexity and the system performance.
\subsubsection{Robust gradient method}
The first method applies the gradient method in \cite[Table I]{AF-BF} with grad$_{\bar{\mathbf g}}$ determined by the following gradient estimate
\ben
\text{grad}_{\bar {\mathbf g}}&=&\frac{1}{2\delta}\big[\big(\text{SNR}(\bar{\mathbf g}+\delta \mathbf e_1)-\text{SNR}(\bar{\mathbf g}-\delta \mathbf e_1)\big), \cdots,\nonumber\\ &&\big(\text{SNR}(\bar{\mathbf g}+\delta \mathbf e_{2N_T})-\text{SNR}(\bar{\mathbf g}-\delta \mathbf e_{2N_T})\big)\big]^T,\label{equ:gradient}
\enn
where $\bar{\mathbf g}\triangleq[\text{Re}\{\mathbf g\}^T, \text{Im}\{\mathbf g\}^T]^T$, $\delta$ is a small positive constant and the SNR in \eqref{equ:gradient} is expressed as a function of $\bar{\mathbf g}$. Note that our gradient estimate in \eqref{equ:gradient} is different from \cite{AF-BF}, where they compute it in an analytical form for each $\bar{\mathbf g}$. By comparison, for evaluating the gradient estimate in \eqref{equ:gradient}, it has to apply Algorithm~\ref{alg:dinkelbach} for all $4N_T$ vectors $\bar{\mathbf g}_k\pm \delta \mathbf e_i$, $1\leq i\leq 2N_T$. As will be seen in section~\ref{sec:simu}, this method preserves the optimality to some extent.
\subsubsection{Simplified robust method}
In this method, We choose $\mathbf g$ as the nonrobust solution in \cite{AF-BF}, and the power allocation factor $\mathbf c$ as the solution of \eqref{equ:robbw} for given $\mathbf g$. Since this method utilizes Algorithm~\ref{alg:dinkelbach} only once, it has much lower complexity than the \emph{Robust gradient method}. However, as verified in section~\ref{sec:simu}, it shows a near-optimal performance.
\section{Implementation Issues and Complexity}\label{subsec:design}
In this section, we discuss implementation issues and computational complexity for the proposed algorithms. For computing the source BF vector $\mathbf g$, the source needs the CSI $\mathbf H_i$ of the first hops, and the available channel magnitudes $\|\tilde {\mathbf f}_i\|_2$ of the second hops, which can be fed back by each relay. After computing $\mathbf g$ and the real-valued optimal power allocation factor $\mathbf c^\sharp$ at the source, they will be broadcasted to each relay node.
For determining the relay BF matrices, each relay node only requires the local CSIs and the $\mathbf g$ and $c_i^\sharp$ from the source.

In Algorithm \ref{alg:dinkelbach}, one only needs to determine a vector with $R$ real variables $c_i$ rather than $R$ matrices $\mathbf B_i\in\mathbb C^{M_i\times M_i}$ in the conventional method \cite{nonrob}. According to \cite{socp_complexity}, the design complexity of solving the SOCP problem \eqref{equ:f4} can be approximated as $\mathcal O((2^R)^{\frac{3}{2}}R^3\log(1/\theta))$, given a solution accuracy $\theta>0$. Hence the complexity of Algorithm \ref{alg:dinkelbach} is $\mathcal O((2^R)^{\frac{3}{2}}R^3\log(1/\theta))$ that times the number of iterations. By contrast, using the method in \cite{nonrob}, the complexity of the SDP solver is $\mathcal O((N^2(N^2+1)/2)^3\log(1/\theta))$  with $N=\sum_{i=1}^RM_i^2$ that times the number of iterations, which is fairly high. As can be seen from above, our complexity of the SOCP problem in each iteration is much lower than that in \cite{nonrob}. In section \ref{sec:simu}, we will further show that the iteration number by the Dinkelbach-based algorithm is less than the bisection-based algorithm in \cite{nonrob}.

The major computing step of Algorithm \ref{alg:PA} in iteration $\kappa$ is solving problem \eqref{equ:normprofile} for determining $\mathbf w^{(\kappa)}$ and computing $N_T+1$ worst case ${\bf SNR}(\mathbf z)$,  including the intersection point $\mathbf w^{(\kappa)}$, and $N_T$ new vertices. According to \cite{Luo-SDP}, the complexity of solving SDP problem \eqref{equ:normprofile} can be approximated as $\mathcal O(\max (N_T, R+1)^4\sqrt{N_T}\log(1/\theta))$. Notice that in the perfect CSI case, ${\bf SNR}(\mathbf z)$ is directly obtained by Lemma~\ref{theorem:perfectCSI} and Corollary~\ref{cor:Jing}; in the robust case, ${\bf SNR}(\mathbf z)$ is obtained by Algorithm \ref{alg:dinkelbach} in section \ref{subsec:opRBFb}. Section \ref{sec:simu} shows the average iteration time of Algorithm \ref{alg:dinkelbach} and Algorithm \ref{alg:PA}.

\section{Simulations and Discussion}\label{sec:simu}

In this section, we provide numerical results to validate the proposed algorithms in this paper, using the numerical convex optimization solver CVX\cite{CVX}. First, the convergence of Algorithm~\ref{alg:dinkelbach} and Algorithm~\ref{alg:PA} is illustrated, comparing with the bisection approach and the BRB algorithm, respectively. Then, the performance evaluation of our robust design is addressed.

The channel fading is modeled as Rayleigh fading, and each channel entry satisfies the complex normal distribution $\mathcal C\mathcal N(0,1)$. The noise at each node is assumed to be zero-mean unit variance complex Gaussian random variables. We set the power consumed at the source as $10$dB. In our simulations, we set $\varepsilon_i$ as $\varepsilon_i^2=\rho\|\tilde {\mathbf f}_i\|_2^2$  with $\rho\in [0, 1)$. The larger $\rho$ is, the poorer CSI quality will be.   We also  set the convergence thresholds of Algorithm~\ref{alg:dinkelbach}, Algorithm~\ref{alg:PA} respectively as $\delta_1=0.01$, and $\delta_2=0.1$. All results are averaged over $100$ channel realizations.

 The following benchmarks are compared through simulations in this section. %\ref{subsec:providing}.
a) \emph{Perfect optimal method}: this is obtained by our proposed method in section~\ref{sec:opSBF} under perfect CSI assumption.
b) \emph{Perfect gradient method}: this is obtained by the gradient method in \cite{AF-BF} under perfect CSI assumption.
c) \emph{Robust optimal method}: The robust optimal design method proposed in Algorithm~\ref{alg:PA}.
d) \emph{Robust gradient method}.
e) \emph{Simplified robust method}.
f) \emph{Non-robust method}: this was proposed in \cite{AF-BF} using imperfect CSI.

\subsection{Convergence Evaluation}
\begin{figure}[h]
    \centering
    \includegraphics[width=3.5in]{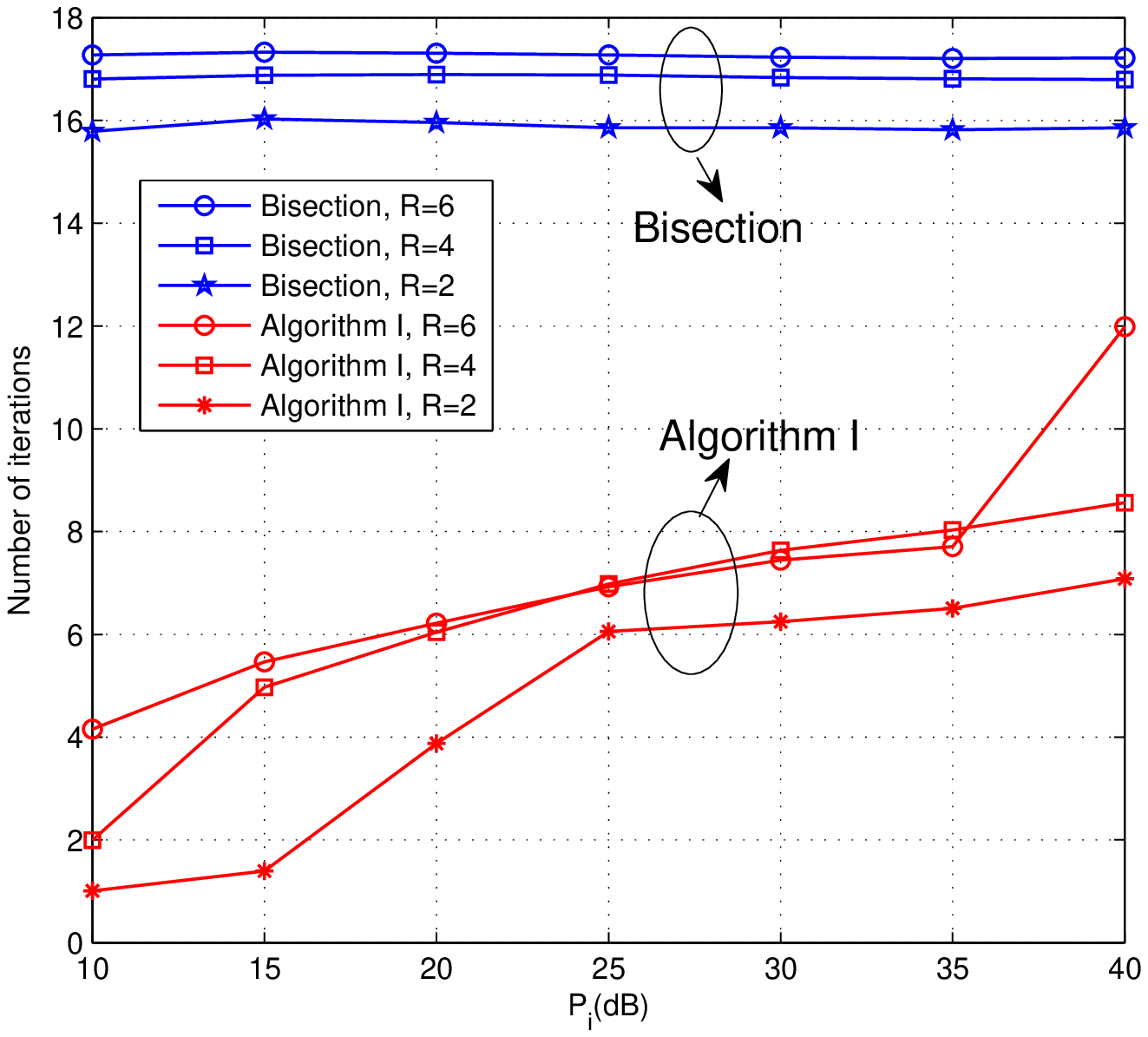}
    \caption{Average iteration time comparison for Algorithm \ref{alg:dinkelbach} and the bisection approach.}\label{fig:dinkelbach}
%\end{figure}
%\begin{figure}[!thbp]
%    \centering
    \includegraphics[width=3.5in]{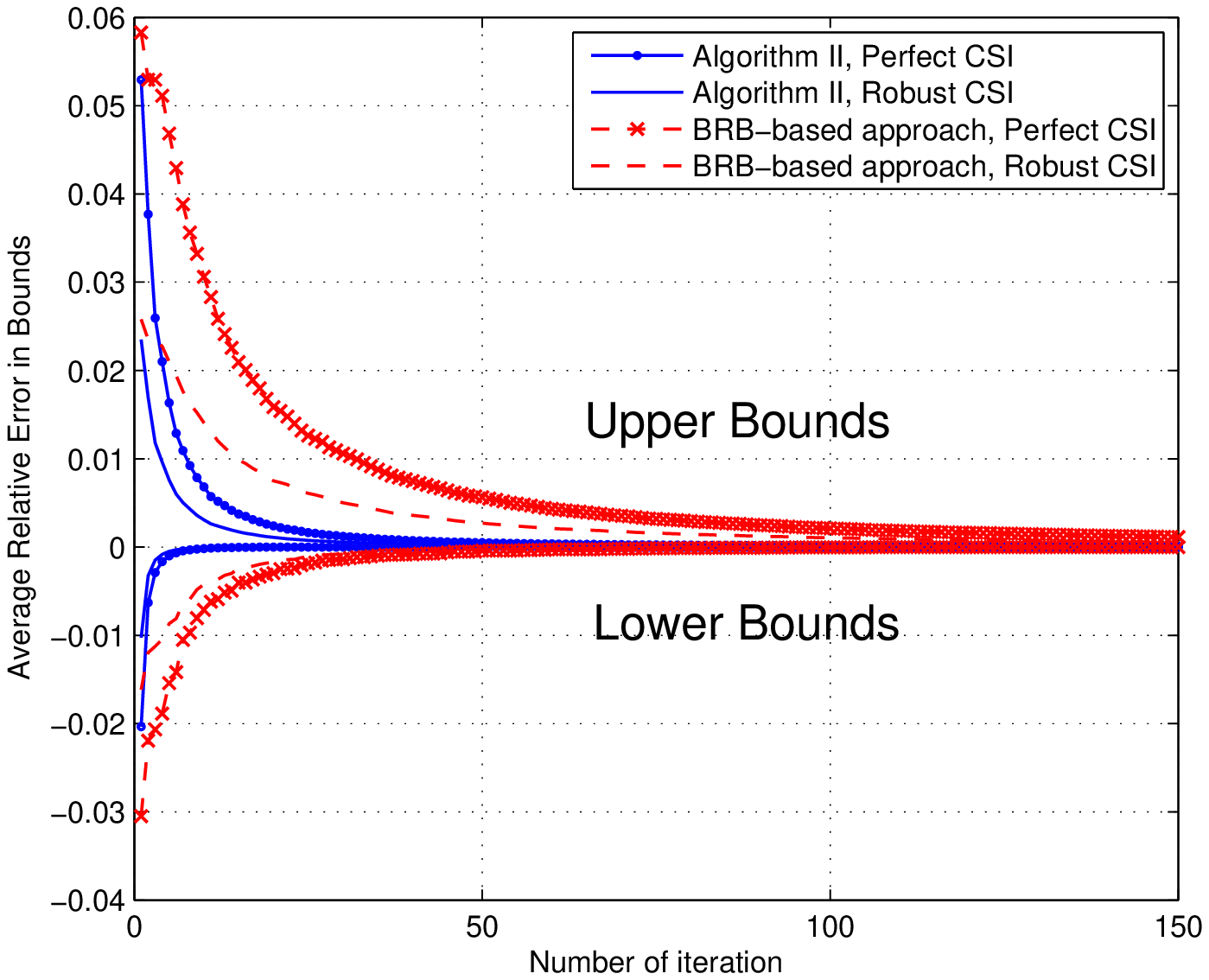}
    \caption{Relative error of lower and upper bounds on the SNR value versus the number of iteration.}\label{fig:PAnBRB}
\end{figure}
\begin{figure}[!thbp]
    \centering
    \includegraphics[width=3.5in]{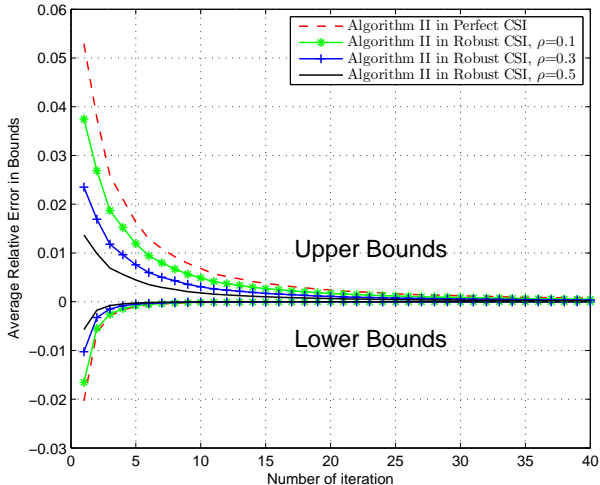}
    \caption{Relative error of lower and upper bounds on the SNR value versus the number of iterations for different $\rho$.}\label{fig:PAversusrho}
\end{figure}
Firstly, we study the convergence performance of Algorithm \ref{alg:dinkelbach}.  Fig. \ref{fig:dinkelbach} shows the average iteration time of Algorithm~\ref{alg:dinkelbach} and the Bisection approach to achieve the predefined accuracy $\delta_1$ for $R\in \{2,4,6\}$. The initial upper bound $\gamma_u^{(0)}$ and lower bound $\gamma_l^{(0)}$ of the Bisection approach are specified as the worst case received SNR of the \emph{Perfect optimal method} and that of the \emph{Non-robust method}, respectively. It can be observed that Algorithm~\ref{alg:dinkelbach} takes less than half iteration numbers of the Bisection approach for most of the SNR regime. Thus, Algorithm~\ref{alg:dinkelbach} is more efficient.

Then, we evaluate the convergence behavior of Algorithm~\ref{alg:PA} and the BRB-based algorithm in~\cite{EB_BRB}.  We set $(N_T, M_1, M_2)=(2,2,2)$, and fix the relay power as $30$dB. Fig.~\ref{fig:PAnBRB} shows the average iteration numbers to achieve the certain accuracies both in the perfect case and in the robust case, where we set $\rho=0.3$. The accuracy of the lower and upper bound are defined as $(f_{\min}-f_{\text{opt}})/f_{\text{opt}}$ and $(f_{\max}-f_{\text{opt}})/f_{\text{opt}}$, respectively, where $f_{\text{opt}}$ is the optimal value of the worst case SNR.  It can be seen that both algorithms quickly achieve the optimal solutions, but more iterations of the BRB algorithm is needed to achieve a certain accuracy. Thus we claim that in our problem, Algorithm~\ref{alg:PA} is more efficient than the BRB-based algorithm. Notice that the convergence performance of the  BRB and PA algorithm is also illustrated in \cite{EB_BRB} \cite{EB_BRB2}, showing that different algorithms are superior in different scenarios.

Another observation from Fig.~\ref{fig:PAnBRB} is that in the robust case both Algorithm~\ref{alg:PA} and BRB algorithm converges more quickly than that in the perfect case. This phenomena is further illustrated in Fig.~\ref{fig:PAversusrho}, which compares the lower and upper bound of the proposed PA algorithm under different $\rho$ assumption. It can be seen that the larger $\rho$ leads to a smaller gap between the upper bound and the lower bound in each iteration. This can be explained as the maximum value over the vertices of the polyblock $\mathcal P^{(\kappa)}$ is lower for larger $\rho$.
%
%In the robust case, the PA algorithm also presents better performance than the BRB algorithm, as illustrated in Fig . However, the average iteration time of both algorithm are reduces.

%\begin{figure}[htp]
%    \centering
%    \includegraphics[width=5in]{relay_fou.eps}
%    \caption{Average worst case SNR  versus different relay power in different error bound case}\label{fig:relay_four}
%\end{figure}

\subsection{Performance Comparison with the Existing Schemes}\label{subsec:providing}
\begin{figure}[h]
    \centering
    \includegraphics[width=3.5in]{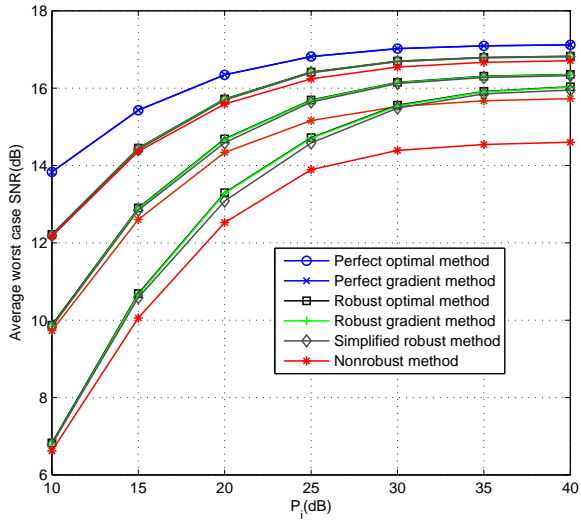}
    \caption{Average worst case SNR  versus different relay power in different error bound case}\label{fig:relay_two}
\end{figure}

We now compare our robust BF design with some existing schemes. The parameters are set as $(N_T, M_1, M_2)=(2,2,2)$. Fig. \ref{fig:relay_two} shows the average worst-case received SNR versus individual relay powers.   Simulations reveal that the \emph{Nonrobust method} will cause increasing performance loss with the increment of channel uncertainty,  compared to the perfect CSI case. Even when the relay power is very large, this loss cannot be compensated. It can be observed that when the relay power is $40$dB, and the channel uncertainty ratio $\rho=0.5$, this performance degradation is about $2.5$dB. On the other hand, the robust design can improve the performance for any channel uncertainty ratio. Although gradient method only achieves local optimality in theory, it behaves well in our simulations and has a close-to-optimal performance in both the perfect case and the robust case. It can also be seen from Fig.~\ref{fig:relay_two} that, as a simple yet efficient method, the \emph{Simplified robust method} has a near optimal performance, which greatly facilitates the practical application of the robust design.

\subsection{Performance Evaluation with Different Network Configurations}
\begin{figure}[!htp]
    \centering
    \includegraphics[width=3.5in]{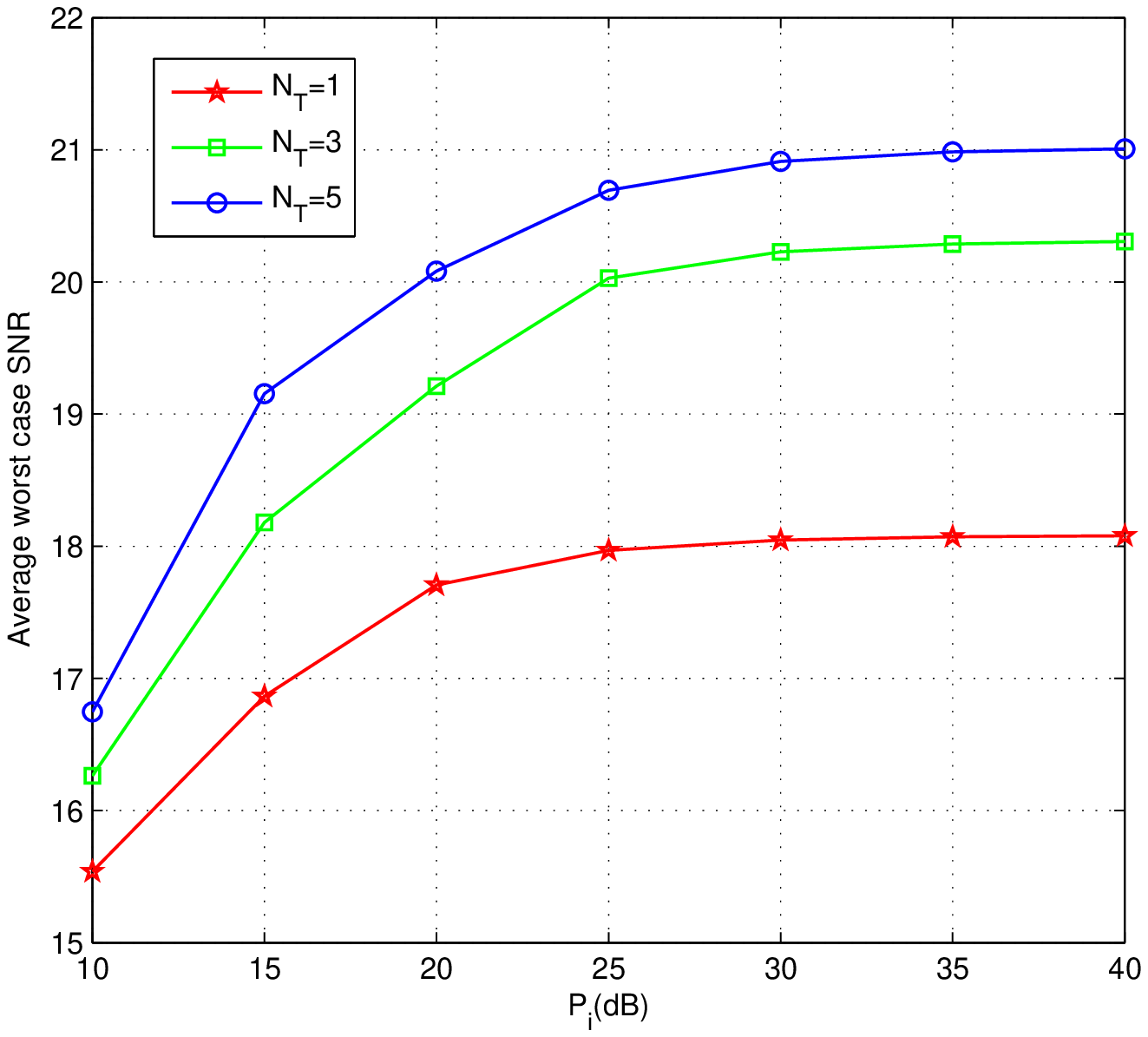}
    \caption{Average worst case SNR vs. different relay power with different numbers of transmit antennas.}\label{fig:source_num}
%\end{figure}
%\begin{figure}[!h]
%    \centering
    \includegraphics[width=3.5in]{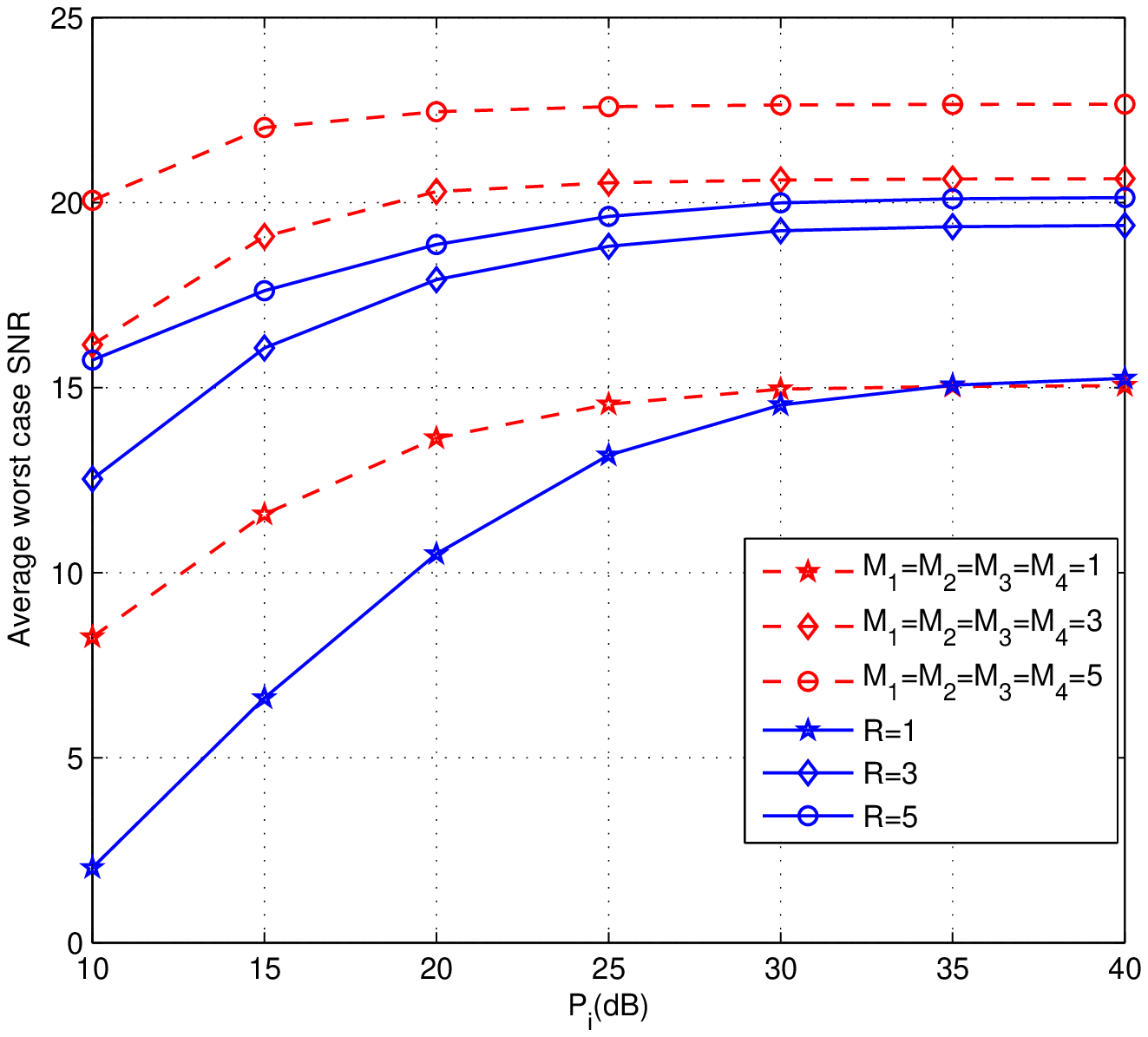}
    \caption{Average worst case SNR vs. different relay power with different relay number and different relay antennas.}\label{fig:relay_num}
\end{figure}
We investigate the impact of different network configurations. We set $\rho=0.3, R=3$ and $M_i=3$ for $i=1,2,3$. One can see that increasing the source antennas $N_T$ from $1$ to $3$ brings the most benefit to the system, and the SNR improves $2.3$ dB when the relay power is $40$ dB. However, the improvement is not so apparent if the source antenna number further increases to $5$, where the SNR only improves $0.7$ dB. The results in Fig. \ref{fig:source_num} indicate that small increment of the source antenna number can greatly improve the system performance.

Fig. \ref{fig:relay_num} compares the average worst case SNR versus the individual relay powers by the proposed robust design in \cite{ganzhengtsp}\cite{HS_Worst}. Here we assume $N_T=1$, since the method in \cite{ganzhengtsp}\cite{HS_Worst} cannot be applied to the general case $N_T\geq 2$. The channel uncertainty parameter $\rho$ is set to be $0.3$. When we consider the network with $R=4$ relays, we consider the cases $M_i=1,3,5$ respectively for $i=1,2,3,4$. The $M_i=1$ case corresponds to the method proposed in \cite{ganzhengtsp}.  From Fig. \ref{fig:relay_num}, when the antenna number at each relay increases from $1$ to $3$, the average worst case SNR increases $5.5$ dB. By contrast, when this number further increases to $5$, the average worst case SNR increases $2.0$ dB. We next investigate the impact of relay number in the system. Here, we assume that each relay is equipped with $2$ antennas. The $R=1$ curve corresponds to the method used in \cite{HS_Worst}. Similar result can be observed when we vary the relay number from $1$ to $3$, where the SNR increases about $4.1$ dB. If we further increase the relay number to $5$, the SNR increases about $0.7$ dB. Fig. \ref{fig:relay_num} shows that increasing the relay number and relay antenna number are both beneficial. Moreover, one can greatly improve the system performance by slightly increasing the relay number or relay antenna number, which validated the importance of our work.

\section{Conclusions}

In this paper, we consider a multi-antenna multi-relay channel with one source and one destination. Assuming that the relay only amplifies and forwards its received signals, we present a global optimal BF design in the robust case.  To maximize the worst case received SNR, we aim to  jointly design the BF matrices at the source and the relays under individual power constraints at the source and the relays. We give a semi-closed form of the relay BF matrices up to a power scalar factor. The optimal and suboptimal algorithms for solving the source BF vector are also proposed. Numerical results verify the advantage of the proposed algorithm over the existing methods.

\appendices
\section{Proof of Lemma \ref{lemma:robb1}}\label{app:lemmarobb1}
Suppose that the SVD of $\mathbf u_i$ is
\ben
\mathbf u_i=\mathbf U_i\begin{bmatrix}\|\mathbf u_i\|_2\\ \mathbf 0_{M_i-1}\end{bmatrix}\triangleq\mathbf U_i{\bm \Sigma_i},\label{equ:ui}
\enn
where  the unitary matrix $\mathbf U_i\in \mathbb C^{M_i\times M_i}$. Then we can express  the relay BF matrices as
\ben
\mathbf B_i=\mathbf Y_i\mathbf U_{i}^H\label{equ:bi},
\enn
where $\mathbf Y_i\in \mathbb C^{M_i\times M_i}$ is a matrix to be determined. Upon substituting \eqref{equ:bi} and \eqref{equ:ui} into \eqref{equ:robb1}, the max-min SNR problem subject to the individual power constraints is given by
\begin{subequations}\label{equ:sub_bi}
\ben
\max _{\mathbf Y_i}  \min_{\triangle \mathbf f\in \mathcal A}&&\frac{|\sum_{i=1}^R\mathbf f_i^T\mathbf Y_i\mathbf {\Sigma}_{i}|^2}{\sigma_D^2+\sigma_R^2\sum_{i=1}^R\mathbf f_i^T\mathbf Y_i\mathbf Y_i^H\mathbf f_i^{\ast}},\\
\text{s.t}.&& \text{tr}(\mathbf Y_i(\mathbf {\Sigma}_{i}\mathbf {\Sigma}_{i}^H+\sigma_R^2)\mathbf Y_i^H)\leq P_i.
\enn
\end{subequations}
We can further partition $\mathbf Y_i$ as
\ben
\mathbf Y_i=\begin{bmatrix}\mathbf b_i &\mathbf Z_{yi}\end{bmatrix},\nonumber
\enn
where $\mathbf b_i\in \mathbb C^{M_i\times 1}$  and $\mathbf Z_{yi}\in \mathbb C^{M_i\times (M_i-1)}$.
Then we have
\ben
\mathbf Y_i\mathbf {\Sigma}_{i}&=&\begin{bmatrix}\mathbf b_i &\mathbf Z_{yi}\end{bmatrix}\begin{bmatrix}\|\mathbf u_i\|_2\\ \mathbf 0\end{bmatrix}=\|\mathbf u_i\|_2\mathbf b_i.\label{equ:22}
\enn
Upon substituting \eqref{equ:22} into \eqref{equ:sub_bi}, we have the received SNR at the destination as
\ben
\text{SNR}&=&\frac{|\sum_{i=1}^R\mathbf f_i^T\mathbf b_i\|\mathbf u_i\|_2|^2}{\sigma_D^2+\sigma_R^2\sum_{i=1}^R\|\mathbf f_i^T\mathbf Y_i\|_2^2},\nonumber\\
&=&\frac{|\sum_{i=1}^R\mathbf f_i^T\mathbf b_i\|\mathbf u_i\|_2|^2}{\sigma_D^2+\sigma_R^2\sum_{i=1}^R(\|\mathbf f_i^T\mathbf b_i\|_2^2+\|\mathbf f_i^T\mathbf Z_{yi}\|_2^2)}\label{equ:indp},
\enn
and the individual relay power becomes
\ben
&&\text{tr}(\mathbf Y_i[\mathbf {\Sigma}_{i}\mathbf {\Sigma}_{i}^H+\sigma_R^2]\mathbf Y_i^H)\nonumber\\
&=& \text{tr}(\mathbf b_i(\|\mathbf u_i\|_2^2+\sigma_R^2)\mathbf b_i^H)
+\sigma_R^2\text{tr}(\mathbf Z_{yi}\mathbf Z_{yi}^H)\nonumber\\
&=& (\|\mathbf u_i\|_2^2+\sigma_R^2)\|\mathbf b_i\|_2^2
+\sigma_R^2\text{tr}(\mathbf Z_{yi}\mathbf Z_{yi}^H).\nonumber
\enn
From \eqref{equ:indp}, to achieve maximum SNR with respect to $\mathbf Y_i$, we must minimize the denominator of SNR by forcing $\mathbf Z_{yi}=\mathbf 0$.
Then we can express $\mathbf B_i$ as
\ben\label{equ:exb}
\mathbf B_i=\mathbf b_i(\mathbf U_i)_1^H=\mathbf b_i\mathbf {\hat u}_{i}^H,
\enn
where $(\mathbf U_i)_1$ denotes the first column of $\mathbf U_i$. Substituting \eqref{equ:exb} into \eqref{equ:robb1}, we get \eqref{equ:robb4}.

\section{Proof of Lemma \ref{lemma:robb2}}\label{app:lemmarobb2}
When $\mathbf b_i=c_i\hat{\tilde {\mathbf f}}_{i}^{\ast}$, we have $\mathbf B_i=c_i\hat{\tilde {\mathbf f}}_{i}^{\ast}\mathbf {\hat u}_i^H$ by Lemma~\ref{lemma:robb1}. Then the objective function of \eqref{equ:robb1} becomes
\ben
&&\frac{\big|\sum_{i=1}^R(\tilde {\mathbf f}_i+\eta_i\tilde {\mathbf f}_{i}^{\|}+\tau_i\tilde {\mathbf f}_i^{\bot})^Tc_i\hat{\tilde {\mathbf f}}_{i}^{\ast}\|\mathbf u_i\|_2\big|^2}{\sigma_R^2\sum_{i=1}^R\|(\tilde  {\mathbf f}_{i}+\eta_i\tilde  {\mathbf f}_{i}^{\|}+\tau_i\tilde  {\mathbf f}_i^{\bot})^Tc_i\hat{\tilde {\mathbf f}}_{i}^{\ast}\|_2^2+\sigma_D^2}\nonumber\\
&=&\frac{\big|\sum_{i=1}^R(\|\tilde {\mathbf  f}_i\|_2+\eta_i)c_i\|\mathbf u_i\|_2\big|^2}{\sigma_R^2\sum_{i=1}^R|\|\tilde {\mathbf  f}_i\|_2+\eta_i|^2 |c_i|^2+\sigma_D^2}.\label{equ:inner}
\enn
where we have decomposed $\triangle {\mathbf f}_i=\eta_i\tilde  {\mathbf  f}_i^{\|}+\tau_i\tilde  {\mathbf f}_i^{\bot}$, with $|\eta_i|^2+| \tau_i|^2\leq \varepsilon_i^2$ and $\eta_i, \tau_i\in \mathbb C$. \eqref{equ:inner} implies that when $\mathbf b_i=c_i\hat{\tilde {\mathbf f}}_{i}^{\ast}$, only $\eta_i$ affects the minimum value of  \eqref{equ:inner}. Then we can focus on $\Delta\mathbf f_i=\eta_i\tilde {\mathbf f}_i^{\|}$, or $\mathbf f_i=(\|\tilde {\mathbf f}_i\|_2+\eta_i)\hat{\tilde {\mathbf f}}_{i}$, with $|\eta|_i\leq \varepsilon_i$.
Thus \eqref{equ:robb1} is equivalent to
\begin{subequations}\label{equ:q2n}
\ben
\max_{\mathbf c} \min _{|\eta_i|\leq \varepsilon_i}&&\frac{\big|\sum_{i=1}^R(\|\tilde {\mathbf f}_i\|_2+\eta_i)c_i\|\mathbf u_i\|_2\big|^2}{\sigma_R^2\sum_{i=1}^R|\|\tilde {\mathbf  f}_i\|_2+\eta_i|^2 |c_i|^2+\sigma_D^2},\\
\text{s.t.}&& |c_i|\leq \sqrt{\frac{P_i}{\sigma_R^2+\|\mathbf u_i\|_2^2}}.
\enn
\end{subequations}

It is worth noting that in \eqref{equ:q2n}, $\eta_i$ is a complex value. We will show in the following that \eqref{equ:q2n} can be transformed into a problem with real valued variable $\eta_i$, which is further limited to $\pm\varepsilon_i$. Our work comes from the idea of \emph{real-valued implemention} that has recently been proposed in \cite{LZ_Realvalue}.
Define
\ben
f_{\eta i}&\triangleq& \|\tilde {\mathbf f}_i\|_2+\eta_i,\nonumber\\
\mathbf R_s&\triangleq&(\mathbf u\odot \mathbf f_{\eta})^\ast(\mathbf u\odot \mathbf f_{\eta})^T,\nonumber\\
\mathbf R_n&\triangleq&\sigma_R^2\text{diag}[|\mathbf f_{\eta}|^2],\nonumber
\enn
where $\mathbf u\triangleq [\|\mathbf u_1\|_2,\cdots,\|\mathbf u_R\|_2]^T$ and the operator $\odot$ denotes the point-wise multiplication of two vectors.
Then we can write the objective of \eqref{equ:q2n} as
\ben\label{equ:snrrd}
\text{SNR}=\frac{\mathbf c^H\mathbf R_s\mathbf c}{\mathbf c^H\mathbf R_n\mathbf c+\sigma_D^2}.
\enn

Note that $\mathbf R_n$ is a real-valued diagonal matrix, while $\mathbf R_s$ is in general complex-valued. The \emph{real-valued implementation} idea~\cite{LZ_Realvalue} aims to transform $\mathbf R_s$ into a real-valued matrix.
First we can write $\mathbf u\odot \mathbf f_{\eta}=\mathbf u\odot |\mathbf f_{\eta}| \odot{\bm  \varphi}$, where ${\bm  \varphi}\triangleq [e^{j\varphi_{1}}, \cdots, e^{j\varphi_{R}}]^T$, $e^{j\varphi_{i}}$ denotes the phase of $f_{\eta i}$ and $j=\sqrt{-1}$. Then for any complex vector $\mathbf c$, one can always decompose it into the form $\mathbf c=\tilde {\mathbf c}\odot \tilde {\bm  \varphi}$, where $\tilde {\bm  \varphi}\triangleq [e^{-j\varphi_{1}}, \cdots, e^{-j\varphi_{R}}]^T$, and  $\tilde {\mathbf c}$ is determined by element-wise division between $\mathbf c$ and $\tilde {\bm  \varphi}$. By referring to \eqref{equ:snrrd}, the objective of \eqref{equ:q2n} is given by
%\begin{subequations}
\ben\label{equ:realvalue}
\text{SNR}&=&\frac{(\tilde {\mathbf c}\odot \tilde {\bm  \varphi})^H(\mathbf u\odot |\mathbf f_{\eta}| \odot{\bm  \varphi})^\ast(\mathbf u\odot |\mathbf f_{\eta}| \odot{\bm  \varphi})^T(\tilde {\mathbf c}\odot \tilde {\bm  \varphi})}{(\tilde {\mathbf c}\odot {\bm  \varphi})^H\mathbf R_n(\tilde {\mathbf c}\odot {\bm  \varphi})+\sigma_D^2}\nonumber\\
&=& \frac{\tilde {\mathbf c}^H(\mathbf u\odot |\mathbf f_{\eta}| )^\ast(\mathbf u\odot |\mathbf f_{\eta}| )^T\tilde {\mathbf c}}{\tilde {\mathbf c}^H\mathbf R_n\tilde {\mathbf c}+\sigma_D^2}\nonumber\\
&=&\frac{\tilde {\mathbf c}^H\bar{\mathbf R}_s\tilde {\mathbf c}}{\tilde {\mathbf c}^H\mathbf R_n\tilde {\mathbf c}+\sigma_D^2}\nonumber,
\enn
%\end{subequations}
where $\bar{\mathbf R}_s\triangleq(\mathbf u\odot |\mathbf f_{\eta}| )^\ast(\mathbf u\odot |\mathbf f_{\eta}| )^T$ is a real-valued matrix. Notice for any real-valued $\bar{\mathbf R}_s$, $\mathbf R_n$, by maximizing the received SNR, the corresponding $\tilde {\mathbf c}$ must be real-valued \cite{ganzhengtsp} \cite{LZ_Realvalue}. Now \eqref{equ:q2n} can be rewritten as
\begin{subequations}\label{equ:phi}
\ben
\max_{\tilde {\mathbf c}\in \mathbb R^R} \min _{|\eta_i|\leq \varepsilon_i}&&\frac{\tilde {\mathbf c}^H\bar{\mathbf R}_s\tilde {\mathbf c}}{\tilde {\mathbf c}^H\mathbf R_n\tilde {\mathbf c}+\sigma_D^2}\\
\text{s.t.}&& |\tilde c_i|\leq \sqrt{\frac{P_i}{\sigma_R^2+\|\mathbf u_i\|_2^2}}.
\enn
\end{subequations}
\begin{figure}[!t]
    \centering
    \includegraphics[width=3.5in]{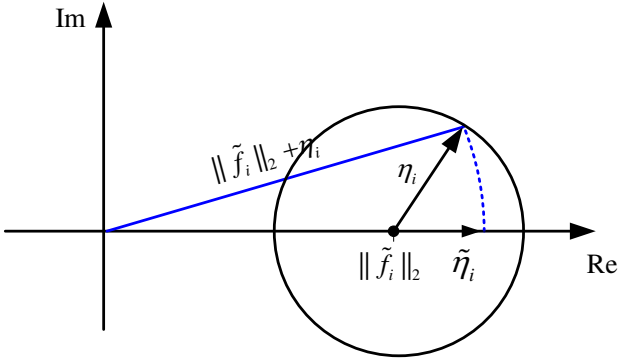}
    \caption{Rotating $\|\tilde{ \mathbf f}_i\|_2+\eta_i$ to the real axis, one can always find a real valued $\tilde {\eta}_i$ such that $|\|\tilde{\mathbf  f}_i\|_2+\eta_i|=|\|\tilde {\mathbf f}_i\|_2+\tilde {\eta}_i|$.}\label{fig:rot}
\end{figure}
In  \cite{LZ_Realvalue}, since ${\bm  \varphi}$ is fixed in their perfect CSI assumption,  they find the optimal solution $\tilde {\mathbf c}$ in \eqref{equ:phi} and obtain $\mathbf c$ by $\mathbf c=\tilde {\mathbf c}\odot \tilde{\bm  \varphi}$. In our case, things are a bit different: the value of $\tilde{\bm \varphi}$ is not important in this problem.
From \eqref{equ:phi}, we can see that it is $|\mathbf f_{\eta}|$ rather than $\tilde{\bm \varphi}$ that affects  the value of worst case SNR. For any given $|\eta_i|\leq \varepsilon_i, \eta_i\in \mathbb C$, we can find a real-valued $|\tilde {\eta}_i|\leq \varepsilon_i, \tilde {\eta}_i\in \mathbb R$ such that $|\|\tilde{\mathbf f}_i\|_2+\eta_i|=|\|\tilde {\mathbf f}_i\|_2+\tilde {\eta}_i|$ as shown in Fig. \ref{fig:rot}. Therefore, considering real-valued $\eta_i$ will not lose the optimality of \eqref{equ:phi}, or equivalently \eqref{equ:q2n}. By slight abuse of notation $\mathbf c$ instead of $\tilde {\mathbf c}$,  we can transform \eqref{equ:q2n} into
\begin{subequations}\label{equ:phi2}
\ben
\max_{\mathbf c} \min _{-\varepsilon_i\leq\eta_i\leq \varepsilon_i}&&\frac{\left(\sum_{i=1}^Rf_{\eta i}c_i\|\mathbf u_i\|_2\right)^2}{\sigma_R^2\sum_{i=1}^Rf_{\eta i}^2 c_i^2+\sigma_D^2},\\
\text{s.t.}&& c_i\leq \sqrt{\frac{P_i}{\sigma_R^2+\|\mathbf u_i\|_2^2}},
\enn
\end{subequations}
%where
%\ben
%\mathcal B'\triangleq \{\mathbf a|\mathbf a=[a_1, \cdots, a_R]^T, \|\mathbf{ \tilde f}_i\|_2-\varepsilon_i \leq a_i\leq \|\mathbf{ \tilde f}_i\|_2+\varepsilon_i, \mathbf a\in \mathbb R^R\}.
%\enn

Introducing a slack variable $\gamma$, problem \eqref{equ:phi2} is transformed into
\begin{subequations}\label{equ:q4nwen}
\ben
\max_{\mathbf c, \gamma}&& \gamma\\
\text{s.t.}&&\min_{-\varepsilon_i\leq\eta_i\leq \varepsilon_i}\frac{\left(\sum_{i=1}^Rf_{\eta i}c_i\|\mathbf u_i\|_2\right)^2}{\sigma_R^2\sum_{i=1}^Rf_{\eta i}^2c_i^2+\sigma_D^2}\geq \gamma,\label{equ:boxnbwen}\\
&& c_i\leq \sqrt{\frac{P_i}{\sigma_R^2+\|\mathbf u_i\|_2^2}},
\enn
\end{subequations}
which is equivalent to the following problem.
\begin{subequations}\label{equ:q4n}
\ben
\max_{\mathbf c, \gamma}&& \gamma\\
\text{s.t.}&&\frac{\left(\sum_{i=1}^Rf_{\eta i}c_i\|\mathbf u_i\|_2\right)^2}{\sigma_R^2\sum_{i=1}^Rf_{\eta i}^2c_i^2+\sigma_D^2}\geq \gamma, \,\, -\varepsilon_i\leq\eta_i\leq \varepsilon_i,\label{equ:boxnb}\\
&& c_i\leq \sqrt{\frac{P_i}{\sigma_R^2+\|\mathbf u_i\|_2^2}}.
\enn
\end{subequations}
Let
\ben
f(\mathbf f_{\eta})\triangleq -\sum_{i=1}^Rf_{\eta i}c_i\|\mathbf u_i\|_2+\sqrt{\gamma\left[\sigma_R^2\sum_{i=1}^Rf_{\eta i}^2c_i^2+\sigma_D^2\right]}.
\enn
\eqref{equ:boxnb} is equivalent to $\max_{-\varepsilon_i\leq\eta_i\leq \varepsilon_i} f(\mathbf f_{\eta})\leq 0$. Note that $f(\mathbf f_{\eta})$ is convex in $\mathbf f_{\eta}$ and reaches the maximization at the vertices \cite{ganzhengtsp}. Hence the optimal solution of problem \eqref{equ:q4n} can be obtained by enumerating $2^R$ possibilities of $\mathbf f_{\eta}$, or i.e., $\mathbf f_{\eta}\in \mathcal B$, each one corresponding to an SOCP constraint. Or equivalently
\begin{subequations}\label{equ:box}
\ben
\max _{\mathbf c, \gamma}&& \gamma\\
\text{s.t.} &&\frac{\left(\sum_{i=1}^Rf_{\eta i}c_i\|\mathbf u_i\|_2\right)^2}{\sigma_R^2\sum_{i=1}^Rf_{\eta i}^2c_i^2+\sigma_D^2}\geq \gamma, \mathbf f_{\eta}\in \mathcal B,\\
&& c_i\leq \sqrt{\frac{P_i}{\sigma_R^2+\|\mathbf u_i\|_2^2}}.
\enn
\end{subequations}
Notice that \eqref{equ:box} is equivalent to the form in  \eqref{equ:robb3}, our proof is completed.

\section{Proof of Proposition \ref{pro:monotonic}}\label{app:monotonic}
In this Appendix, we will first prove that problem \eqref{equ:pa2} belongs to the class of monotonic optimization problem, or more specifically, ${\bf SNR}(\mathbf w)$ is an increasing function with respect to $\mathbf w\in \mathcal U$.  Then we will show that problem \eqref{equ:boxnewu} and \eqref{equ:pa2} are equivalent.

In \eqref{equ:worstbu}, we have expressed the worst case SNR as a function of $\mathbf w$, where the power allocation factor $\mathbf c^\sharp$ is adaptively determined as optimal solution of \eqref{equ:robbw} with respect to $\mathbf w$.
For convenience, we further define ${\bf \widetilde{SNR}}(\mathbf c, \mathbf w)$ as a function of $\mathbf w$ and $\mathbf c$, where $\mathbf c$ is only one possible power allocation option rather than the optimal choice, or i.e.,
\begin{subequations}\label{equ:anyc}
\ben
{\bf \widetilde{SNR}}(\mathbf c, \mathbf w)&\triangleq&\min_{\mathbf f_{\eta}\in \mathcal B}\frac{\big(\sum_{i=1}^Rf_{\eta i}c_i\sqrt{w_i}\big)^2}{\sigma_R^2\sum_{i=1}^Rf_{\eta i}^2c_i^2+\sigma_D^2},\label{equ:anyca}\\
\text{s.t.}\quad c_i&\leq&\sqrt{\frac{P_i}{\sigma_R^2+w_i}}\label{equ:anycb}.
\enn
\end{subequations}
Then by definition, one can easily see that ${\bf SNR}(\mathbf w)=\widetilde{{\bf SNR}}(\mathbf c^\sharp, \mathbf w)$.
Suppose $\mathbf w'\geq \mathbf w''$, where $\mathbf w'\triangleq[w_1', \cdots, w'_R]^T$ and $\mathbf w''\triangleq[w_1'', \cdots, w_R'']^T$.  Let $\mathbf c'^{\sharp}\triangleq [c'^{\sharp}_1, \cdots, c'^{\sharp}_R]^T$ and $\mathbf c''^{\sharp}\triangleq [c_1''^{\sharp}, \cdots, c_R''^{\sharp}]^T$ be the optimal solution of \eqref{equ:robbw} for  given $\mathbf w'$ and $\mathbf w''$, respectively.  We will show that ${\bf SNR}(\mathbf w')\geq {\bf SNR}(\mathbf w'')$, or equivalently
\ben\label{equ:toshow}
\widetilde{{\bf SNR}}(\mathbf c'^{\sharp}, \mathbf w')\geq \widetilde{{\bf SNR}}(\mathbf c''^{\sharp}, \mathbf w'').
\enn

Choose one special relay power allocation factor  for the given $\mathbf w'$ as $\tilde {\mathbf c}'\triangleq [\tilde c'_1, \cdots, \tilde c'_R]^T$, such that
\ben\label{equ:optimalc}
\tilde c_i'^{ 2}(w_i'+\sigma_R^2)= c_i''^{\sharp 2}( w_i''+\sigma_R^2), \, i=1, \cdots, R.
\enn
By this condition, the relay powers keep unchanged, and thus the power constraints in \eqref{equ:anycb} are not violated.  Since $w_i'\geq w_i''$, we have $ w_i'+\sigma_R^2\geq  w_i''+\sigma_R^2$. Then $\tilde c^{'2}_i\leq c''^{\sharp 2}_i$ by \eqref{equ:optimalc}, that implies $\tilde {c}'^{2}_i\sigma_R^2\leq c''^{\sharp 2}_i\sigma_R^2$. Then by \eqref{equ:optimalc}, we have
\ben\label{wenadd}
\tilde c'^{2}_iw_i'\geq c''^{\sharp 2}_iw_i^{\sharp}.
\enn

Let $\Gamma_1(\mathbf f_{\eta})\triangleq\frac{\big(\sum_{i=1}^Rf_{\eta i}\tilde c'_i\sqrt{w'_i}\big)^2}{\sigma_R^2\sum_{i=1}^Rf_{\eta i}^2\tilde c'^2_i+\sigma_D^2}$ and $\Gamma_2(\mathbf f_{\eta})\triangleq\frac{\big(\sum_{i=1}^Rf_{\eta i} c''^\sharp_i\sqrt{w''_i}\big)^2}{\sigma_R^2\sum_{i=1}^Rf_{\eta i}^2 c''^{\sharp 2}_i+\sigma_D^2}$ for $\mathbf f_{\eta}\in \mathcal B$. Then we have
\ben
\widetilde{{\bf SNR}}(\tilde {\mathbf c}', \mathbf w')&=&\min_{\mathbf f_{\eta}\in \mathcal B}\Gamma_1(\mathbf f_{\eta}),\label{64wen1}\\
\widetilde{{\bf SNR}}(\mathbf c''^\sharp, \mathbf w'')&=&\min_{\mathbf f_{\eta}\in \mathcal B}\Gamma_2(\mathbf f_{\eta}).\label{64wen2}
\enn

We first
fix some $\mathbf f_{\eta}\in \mathcal B$.
%since the worst case channel uncertain $\mathbf f_{\eta}$  is difficult to determine as a complicated function of $\mathbf c$ and $\mathbf w$.
Note that $|\eta_i|\leq \varepsilon_i\leq \|\tilde{\mathbf f}_i\|_2$, we have
$f_{\eta i}=\|\tilde{\mathbf f}_i\|_2+\eta_i\geq 0$.
Then for fixed $\mathbf f_{\eta}$,  the numerator of $\Gamma_1(\mathbf f_\eta)$  is larger than that of $\Gamma_2(\mathbf f_\eta)$ due to \eqref{wenadd}; while the denominator of $\Gamma_1(\mathbf f_\eta)$  is smaller than that of $\Gamma_2(\mathbf f_\eta)$ due to $\tilde c^{'2}_i\leq c^{\sharp 2}_i$.
Hence for any $\mathbf f_{\eta}\in \mathcal B$, we have
\ben
\Gamma_1(\mathbf f_\eta)\geq\Gamma_2(\mathbf f_\eta).\label{wentang}
\enn
%
%Now we are about to prove that the worst case SNR over $\mathbf f_{\eta}\in \mathcal B$ for $(\mathbf c, \mathbf w)=(\tilde{\mathbf {c}}',\mathbf w')$ is higher than that for $(\mathbf c, \mathbf w)=(\mathbf c''^{\sharp}, \mathbf w'')$, which is not straightforward, since the worst case channel uncertain $\mathbf f_{\eta}$  is difficult to determine as a complicated function of $\mathbf c$ and $\mathbf w$.
%
Suppose that the minimum value of $\Gamma_1(\mathbf f_\eta)$ over $\mathbf f_{\eta}\in \mathcal B$ is achieved at $\mathbf f'_\eta$, i. e., $\min_{\mathbf f_{\eta}\in \mathcal B}\Gamma_1(\mathbf f_{\eta})=\Gamma_1(\mathbf f'_\eta)$. Then we have
%\begin{lemma}\label{lemma:add}
%Suppose $f_1(\mathbf x)\geq f_2(\mathbf x)$ for any $\mathbf x\in \mathcal H\subseteq \mathcal R^N$, then $\min_{\mathbf x\in \mathcal H}f_1(\mathbf x)\geq \min_{\mathbf x\in \mathcal H}f_2(\mathbf x)$.
%\end{lemma}
%\begin{proof}
%Define the minimum value of $f_1(\mathbf x)$ and $f_2(\mathbf x)$ are achieved by $\mathbf x_1$ and $\mathbf x_2$, respectively.  Then we have
\ben
\min_{\mathbf f_{\eta}\in \mathcal B}\Gamma_1(\mathbf f_{\eta})=\Gamma_1(\mathbf f'_{\eta})\overset{(a)}\geq \Gamma_2(\mathbf f_{\eta}')\geq \min_{\mathbf f_{\eta}\in \mathcal B}\Gamma_2(\mathbf f_{\eta}),\label{65wen}
\enn
%Lemma~\ref{lemma:add} is proved.
%\end{proof}
where (a) is due to \eqref{wentang}. Then \eqref{64wen1},\eqref{64wen2} and \eqref{65wen} lead to
\ben\label{equ:snr1}
\widetilde{{\bf SNR}}(\tilde {\mathbf c}', \mathbf w')\geq \widetilde{{\bf SNR}}(\mathbf c''^\sharp, \mathbf w'').
\enn
Since $\tilde {\mathbf c}'$ is just chosen to satisfy \eqref{equ:optimalc}, and may not be  optimal for $\mathbf w=\mathbf w'$, we have
\ben\label{equ:snr2}
\widetilde{ {\bf SNR}}(\mathbf c'^\sharp, \mathbf w')\geq \widetilde{ {\bf SNR}}(\tilde {\mathbf c}', \mathbf w').
\enn
By \eqref{equ:snr1} and \eqref{equ:snr2}, we have \eqref{equ:toshow}, which implies that ${\bf SNR}(\mathbf w)$ is a monotonic increasing function with respect to $\mathbf w$.

%\begin{figure}[htp]
%    \centering
%    \includegraphics[width=5in]{pareto.eps}
%    \caption{An example of $\mathcal U$ when $R=2$.
%    The Pareto boundary is only a part of the  boundary of $\mathcal U$. Two end points $A_i$ is determined by $\mathbf G_{\mathbf A_i}\triangleq\arg\max_{\text{tr}(\mathbf G)= P_s}\text{tr}(\mathbf H_i^H\mathbf H_i\mathbf G)=P_s{\bm\upsilon}(\mathbf H_i^H\mathbf H_i)[{\bm\upsilon}(\mathbf H_i^H\mathbf H_i)]^H$, for $i=1,2$. Then point $A_i=(P_s\|\mathbf H_1{\bm\upsilon}(\mathbf H_i^H\mathbf H_i)\|_2^2,P_s\|\mathbf H_2{\bm\upsilon}(\mathbf H_i^H\mathbf H_i)\|_2^2)$. The dashed line uniquely determines the ratio ${\bm \omega}$ between each element of the intersection point $\mathbf r$ and its $1-$norm $\|\mathbf r\|_1$.
%}\label{pareto}
%\end{figure}

On the other hand, $\mathcal U$ has been proved to be convex \cite{RM_Pareto}. Consequently $\mathcal U$ is normal due to the property of convex region~\cite{JB_PA}. Following the similar lines in \cite{RM_Pareto}, it can be shown that $\mathcal U$ is nonempty and closed. Thus \eqref{equ:pa2} is a monotonic optimization problem.

As compared to other nonconvex problems, monotonic problems have the important property that its optimal solution is attained on the Pareto boundary of the feasible region, which can be utilized for solving the problem efficiently.

According to \cite{RM_Pareto}, any Pareto boundary of $\mathcal U$ must be achieved by some rank one matrix $\mathbf G$,  we claim that problem \eqref{equ:boxnewu} and \eqref{equ:pa2} are equivalent.
\section*{Acknowledgement}
We would like to thank the anonymous reviewer for their great constructive comments to improve our work.

\end{document}